\newcommand{\mc}{\mathcal}
\newcommand{\mb}{\mathbb}
\newtheorem{theorem}{Theorem}
\newtheorem{prop}[theorem]{Proposition}
\theoremstyle{definition}
\newtheorem*{remark}{Remark}
\newtheorem*{remarks}{Remarks}
\newenvironment{proof-of-lemma}[1]{\noindent{\bf Proof of Lemma #1}\hspace*{1em}}{\qed\bigskip}
\begin{document}

\bibliographystyle{plainnat}

\begin{frontmatter}


\title{Bouncy Hybrid Sampler as a Unifying Device}
\runtitle{Bouncy Hybrid Sampler}

 \author{Jelena Markovic\corref{}\ead[label=e2]{jelenam@stanford.edu}}
 \and
 \author{Amir Sepehri\corref{}\ead[label=e1]{asepehri@stanford.edu}}
 
 \address{Department of Statistics \\ Sequoia Hall, Stanford, CA 94305, USA\\ \printead{e2}}
 \affiliation{Stanford University}
\runauthor{J.~Markovic and A.~Sepehri}

\begin{abstract}
This work introduces a class of rejection-free Markov chain Monte Carlo (MCMC) samplers, named the Bouncy Hybrid Sampler, which unifies several existing methods from the literature. Examples include the Bouncy Particle Sampler of \cite{PhysRevE.85.026703, bouchard2015bouncy} 
 	 and the Hamiltonian MCMC. Following the introduced general framework, we derive a new sampler called the Quadratic Bouncy Hybrid Sampler. We apply this novel sampler to the problem of sampling from a truncated Gaussian distribution.
\end{abstract}


\begin{keyword}
\kwd{Piecewise deterministic Markov processes}
\kwd{Rejection-free simulation}
\kwd{Markov chain Monte Carlo}
\kwd{Infinitesimal generator}
\kwd{Inhomogeneous Poisson process}
\kwd{Hamiltonian MCMC}
\kwd{Non-reversible Markov chain}
\kwd{Bouncy Particle Sampler}
\kwd{Zig-Zag process}
\end{keyword}

\end{frontmatter}


\section{Introduction} \label{sec:intro} 

Markov chain Monte Carlo (MCMC) methods are central tools to sample complex distributions in many applications in sciences and engineering. There are already numerous MCMC samplers developed in the literature. This paper addresses the problem of connecting some of the most recently developed Markov chains via a unifying framework.

Traditional Markov chains, including Metropolis-Hastings and Hamiltonian Markov chain, are reversible by construction, meaning that they behave similarly when considered ``forward in time'' or ``backward in time.'' Markov chains have been extensively studied in the literature \citep{duane1987hybrid, neal2011mcmc}.
Non-reversible Markov chains increased in popularity as it became known that they can converge faster to a target distribution than the reversible ones. The original examples of non-reversible chains were constructed by ``lifting'' the reversible ones, i.e.~by splitting each state into several states \citep{diaconis2000analysis, chen1999lifting}. Beyond these constructions by lifting, the non-reversible chains can be harder to construct.

\cite{PhysRevE.85.026703} introduced the infinitesimal Metropolis-Hastings filter and combined it with the lifting framework to construct a non-reversible rejection-free continuous time Markov process to sample from a density function on $\mb{R}^d$. \cite{bouchard2015bouncy} analyzed this method, proving that the target distribution is the invariant measure of the corresponding Markov process. They named the procedure ``Bouncy Particle Sampler'' (BPS) and considered various aspects of implementation. In BPS, the ``particle'' moves along a straight line, applying the infinitesimal Metropolis-Hasting filter when facing an energy barrier. In the rejection event of an infinitesimal Metropolis-Hastings filter, the move is not rejected; instead, the particle bounces against the energy barrier. Furthermore, they showed the chain is irreducible with \cite{deligiannidis2017exponential} proving it is geometrically ergodic. Further modifications and applications of the BPS are already presented in many works, including \cite{wu2017generalized, pakman2016stochastic, pakman2017binary, sepehri2017non}.

Another similar piecewise deterministic and non-reversible process with a particle moving along linear lines is the Zig-Zag process introduced in \cite{bierkens2016zig}. The difference between the BPS and the Zig-Zag Markov chain is that in a Zig-Zag process, the velocity changes only along a single coordinate at every trajectory switch.

To unify these recently developed Markov chains, we introduce a class of rejection-free Markov chain Monte Carlo samplers, named the Bouncy Hybrid Sampler (BHS). The BHS is written relying on piecewise deterministic Markov processes (PDMP).
There is already some literature pointing out the recently developed continuous time samplers, including the BPS and a class of sequential MCMC algorithms, are special instances of piecewise deterministic Markov processes \citep{fearnhead2016piecewise}. Writing a sampler in PDMP language is useful as it allows using the classical results from PDMP literature.

In the BHS, the particle moves along a, not necessarily linear, trajectory with a time-dependent speed. 
The position and velocity functions are govern by a system of differential equations. The change in velocity is depicted via a chosen function of position. To ensure the sampler converges to the right distribution, the moving time on a single trajectory is determined via a Poisson process whose rate depends on the function of choice. After moving for a random time sampled as the first arrival time of the Poisson process, the particle switches the trajectories.

The proposed BHS family presents an infinite class of samplers, where BPS, Zig-Zag and Hamiltonian MC are the special instances.

\subsection{Outline}

Section \ref{sec:background} provides the necessary background on Markov processes and more specific PDMP. Section \ref{sec:hmc} introduces a novel family of so called the \underline{\textit{Bouncy Hybrid}} Markov chain Monte Carlo samplers. The computational aspects of the algorithm are detailed in Section \ref{sec:algorithm}.
Section \ref{sec:tmvg} illustrates a new sampler derived from this class called the \underline{\textit{Quadratic Bouncy Hybrid}} MCMC, applied to sampling from a truncated normal distribution. 
Section \ref{sec:extensions} provides further modifications and generalizations of the proposed BHS family to create an even bigger family of samplers. Section \ref{sec:zig:zag} presents another application of piecewise deterministic Markov chains to create the \underline{\textit{Coordinate Bouncy Hybrid}} MCMC.


\section{Background} \label{sec:background}

\subsection{Continuous Time Markov Processes}

This section provides a very brief introduction to concepts and facts from the theory of Markov processes, which will be used in later sections; for a textbook length treatment see, for example, \cite{kolokoltsov2011markov}.
A stochastic process $\left\{Z_t \mid t\ge 0\right\}$ on a measurable space $(\mathcal{Z},\mc{B})$ is a collection of random variables assuming values in $\mathcal{Z}$. Formally, $\left\{Z_t \mid t\ge 0\right\}$ is defined on the probability space $(\Omega, \mc{F},\mb{P})$, where $\Omega = \left\{f:[0,\infty) \rightarrow \mathcal{Z}\right\}$, $\mc{F}$ is the $\sigma$-algebra generated by the sets ${f\in \Omega \mid f(t) \in B}$ for $t\ge 0$ and $B \in \mc{B}$, and $\mb{P}$ is the probability measure corresponding to the law of $\left\{Z_t \mid t\ge 0\right\}$. With some abuse of notation, $Z_t$ will be used for the stochastic process $\left\{Z_t \mid t\ge 0\right\}$.

A process $Z_t$ is a \textit{Markov process} if $\mb{P}\left\{Z_t \in B \mid Z_s; s \le t_0\right\} = \mb{P}\left\{Z_t \in B \mid Z_{t_0}\right\}$ for all $t_0 < t$ and $B\in \mc{B}$.
In other words, if the probabilistic dependence of the future on the past is through the present value. A Markov process is called \textit{homogeneous} if $\mb{P}\left\{Z_{t+h} \in B \mid Z_t\right\} = \mb{P}\left\{Z_h \in B \mid Z_0\right\}$ for all $ t,h\geq 0$ and $B\in \mc{B}$.
Let $p(t,z,B) = \mb{P}\left\{Z_t \in B \mid Z_0 = z\right\}$ be the \textit{transition kernel} associated with the process $Z_t$. A probability measure $\mu$ is called an \textit{invariant measure} for $Z_t$ if
\begin{equation*}
	\int p(t,z,B) \mu(dz) = \mu(B)  \qquad \forall B\in \mc{B}\text{ and }\forall t \in [0,\infty).
\end{equation*}

To every homogeneous Markov process one can assign an \textit{infinitesimal generator} defined as follows
\begin{equation*}
	\mc{A}f(z) = \lim_{t \downarrow 0} \frac{E\left[f(Z_t) \mid Z_0 = z\right] - f(z)}{t}.
\end{equation*}
Intuitively, the quantity $\mc{A}f(z)$ is the mean infinitesimal rate of change in $f(Z_0)$, evolving according the process $Z_t$ starting at $Z_0=z$. Informally, if $z$ is chosen from the invariant measure, one expects the mean rate of change, $\int \mc{A}f(z) \mu(dz)$, to be zero and vice verse. This is formalized as the following proposition, which is the main result needed in the following sections.

\begin{prop} \label{prop:InvarianceThm} 
Let $\mc{A}$ be the infinitesimal generator associated to the process $Z_t$ and $\mu$ be a probability measure on $\mathcal{Z}$ such that
\begin{equation*}
	\int \mc{A}f(z) \mu(dz) = 0  \qquad \forall f\in \mc{C}, 
\end{equation*}
where $\mc{C}$ is a large enough class of functions. Then, $\mu$ is the invariant measure for the process $Z_t$.
\end{prop}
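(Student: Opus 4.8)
The plan is to reduce invariance to the statement that a certain scalar function of $t$ is constant. Introduce the transition semigroup $P_t f(z) = E\left[f(Z_t)\mid Z_0 = z\right] = \int f(y)\, p(t,z,dy)$, and for a fixed test function $f$ set $\varphi(t) = \int P_t f(z)\,\mu(dz)$. Since $\varphi(t) = \int\!\int f(y)\, p(t,z,dy)\,\mu(dz)$, the assertion that $\mu$ is invariant — i.e.\ $\int p(t,z,B)\,\mu(dz) = \mu(B)$ for all $B\in\mc{B}$ and $t\ge 0$ — is equivalent to $\varphi(t) = \varphi(0) = \int f\,d\mu$ holding for all $t$ and for $f$ ranging over a class of functions that separates probability measures (for instance bounded continuous functions, or indicators $\mathbf{1}_B$ with $B\in\mc{B}$). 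So it suffices to prove that $\varphi$ is constant on $[0,\infty)$.

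Next I would invoke the two standard facts from semigroup theory. The Chapman--Kolmogorov equations give the semigroup property $P_{t+s} = P_t P_s$; differentiating this (equivalently, using the forward and backward Kolmogorov equations) yields $\tfrac{d}{dt} P_t f = \mc{A} P_t f = P_t \mc{A} f$ for $f$ in the domain of $\mc{A}$, with $P_t f$ remaining in that domain. Granting these, I would differentiate under the integral sign:
\[
\varphi'(t) = \int \frac{d}{dt} P_t f(z)\,\mu(dz) = \int \mc{A}\bigl(P_t f\bigr)(z)\,\mu(dz) = 0,
\]
where the last step is exactly the hypothesis of the proposition applied to the function $P_t f$. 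Hence $\varphi$ is constant, $\varphi(t) \equiv \varphi(0) = \int f\,d\mu$, and letting $f$ run over a measure-determining subfamily of $\mc{C}$ gives the invariance identity.

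Two points require care, and they are precisely what is hidden in the phrase ``$\mc{C}$ is a large enough class of functions.'' First, the interchange of $\tfrac{d}{dt}$ and $\int(\cdot)\,\mu(dz)$: this is justified by dominated convergence, using $\tfrac{1}{h}(P_h g - g)\to \mc{A}g$ pointwise together with a local-in-$s$ bound on $\|\mc{A}P_s f\|_\infty$ (which follows from boundedness of $f$ and $\mc{A}f$ and contractivity of $P_s$). Second, and more delicate, one needs $P_t f\in\mc{C}$ so that the hypothesis can be applied to it; in a rigorous treatment one takes $\mc{C}$ to be a core for $\mc{A}$ that is invariant under the semigroup $(P_t)_{t\ge0}$ — for the PDMPs appearing later in the paper, a suitable space of smooth, sufficiently decaying functions serves this purpose — and then extends the conclusion $\varphi(t)=\varphi(0)$ from $\mc{C}$ to a measure-determining class by an approximation/density argument. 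I expect this second point — establishing that $\mc{C}$ is simultaneously rich enough and semigroup-invariant — to be the only real obstacle; the differentiation identity and the reduction above are classical.
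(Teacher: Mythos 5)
The paper does not prove this proposition at all: it is stated as a background fact from the theory of Markov processes (with a pointer to the textbook literature), and the deliberately vague phrase ``large enough class of functions'' signals that the authors are not attempting a rigorous statement. Your argument --- reducing invariance to constancy of $\varphi(t)=\int P_t f\,d\mu$, differentiating via $\tfrac{d}{dt}P_tf=\mc{A}P_tf$, and applying the hypothesis to $P_tf$ --- is the standard semigroup proof of exactly this fact, and you correctly identify the two genuine technical obligations (interchange of derivative and integral, and semigroup-invariance of a measure-determining core $\mc{C}$) that the paper's informal statement sweeps under the rug. Your proposal is correct in outline and is the proof the authors implicitly have in mind.
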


\subsection{Piecewise Deterministic Markov Processes}

The required background on the piecewise deterministic Markov processes (PDMP), which consists of a definition and a derivation of the infinitesimal generator, is developed in this section. As introduced in \cite{davis1984piecewise}, a piecewise deterministic Markov process is a stochastic process consisting of deterministic motion punctuated by Poisson jumps. Informally, a PDMP on a set $E$ is characterized by three objects, namely, a flow $\phi_t$, a jump rate $\lambda$, and a transition kernel $Q$. Starting at $z\in E$, it evolves according to the flow $\phi_s(z)$ until the first jump time $T_1$ occurs. $T_1$ corresponds to the first arrival time a inhomogeneous Poisson process with rate function $\lambda (\phi_s(z))$. More precisely, $T_1$ is has the following distribution
\begin{align*}
	\mb{P}\{T_1>t\} = \exp\left(-\int_0^t \lambda(\phi_s(z)) ds\right).
\end{align*}
The location of the process at time $T_1$ is drawn from the measure $Q(\phi_{T_1}(z),\cdot)$ and the process continues from this point according to the flow $\phi_s$ until the second jump time $T_2$, and so on. 

In what follows, we briefly define the PDMP and review its basic properties. For a detailed account of the regularity conditions needed see \cite{davis1984piecewise}. Let $\mc{I}$ be a countable set and $d:\; \mc{I} \rightarrow \mb{N}$ be a given function. For each $i\in \mc{I}$, let $M_i$ be an open set in the Euclidean space $\mb{R}^{d(i)}$. Then, the state space $E$ is defined as follows
\begin{equation*}
	E = \left\{z=(i,x) \mid i\in \mc{I}, x\in M_i \right\}.
\end{equation*}
The state of the process will be denoted $z_t = (i_t,x_t)$. The law of the process is determined by the following objects:
\begin{enumerate}
\item Flows $\left\{\phi_s^i(\cdot); i\in \mc{I}\right\}$, defined by the ordinary differential equation
  \begin{align*}
	  \frac{d}{dt} \phi_t^i(x) = F^i(\phi_t^i(x)) \text{ and } \phi_0^i(x) = x,
   \end{align*}
   for vector fields $\left\{F^i; i\in \mc{I}\right\}$.
\item A measurable rate function $\lambda: E \rightarrow \mb{R}_+$.
\item A transition kernel $Q(z;A)$ for $z\in E$ and $A\subset E$.
\end{enumerate}
Let $\partial M_i$ denote the boundary of $M_i$. For $z=(i,x)\in E$, define $t^\star(z)$ as the first time the flow hits the boundary $\partial M_i$, starting from $z$. That is
\begin{align*}
	t^\star (z) = \inf \left\{t>0: \phi^i_t(x)\in \partial M_i\right\}.
\end{align*}
The process $Z_t$ starting from $z$ can now be constructed as follows. Define the distribution function $F$ by
\begin{equation*}
	F(t) = \begin{cases}
		1- \exp\left(-\int_0^t \lambda(i,\phi^i_s(x)) ds \right) & \textnormal{ for } t< t^\star(z)\\
		1 & \textnormal{ for }  t\ge t^\star(z).
\end{cases}
\end{equation*}
Select a random variable $T_1 \sim F$. Independent of $T_1$, select $(i^\prime, x^\prime) \in E$ according to $Q\left(\left(i,\phi^i_{T_1}(x)\right); \cdot \right)$. The trajectory of $Z_t$ is given by
\begin{equation*}
	Z_t = \left(i_t, x_t\right) =	\begin{cases}
		\left(i,\phi^i_t(x)\right) & \textnormal{ for }t< T_1,\\
		\left(i^\prime, x^\prime\right)  & \textnormal{ for } t=T_1. \end{cases}
\end{equation*}
Starting from $Z_{T_1}$, sample the next inter-jump time $T_2-T_1$ and post-jump location $Z_{T_2}$ in a similar way. And so on. This defines a Markov process with deterministic paths between the jump times. We assume that $\mb{E}N_t <\infty$, where $N_t$ is the number of jumps in $[0,t]$. This holds under a mild regularity condition on $\lambda$, which is satisfied in all the cases considered in the paper, assuming the energy function (negative log-likelihood) is piecewise continuously differentiable.

The infinitesimal generator for PDMPs is given explicitly by \citet[Theorem 5.5]{davis1984piecewise} as follows.

\begin{prop}[Theorem 5.5 in \cite{davis1984piecewise}] \label{prop:generator}
Using the notation above, the infinitesimal generator of the process $z_t = (i_t,x_t)$ defined above is
\begin{equation*}
	\mc{A}f(i,x) = \big\langle\nabla_x f(i,x), F^i(x)\big\rangle 
	+ \lambda(i,x) \int_E \big[ f(j , y)- f(i,x)\big] Q\left((i,x); d(j,y)\right).
\end{equation*} 
\end{prop}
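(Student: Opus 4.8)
The plan is to compute the limit defining $\mc{A}f(i,x)$ directly, by conditioning on the first jump time $T_1$ and expanding every quantity to first order in $t$. Throughout I would work with $f$ in a suitable domain $\mc{C}$ (bounded, and continuously differentiable in $x$ along each flow) and with interior points $z=(i,x)\in E$; since $M_i$ is open we have $t^\star(z)>0$, so for all sufficiently small $t$ the flow $\phi_s^i(x)$ for $s\le t$ stays inside $M_i$, the distribution function $F$ reduces to $F(t)=1-\exp\!\big(-\int_0^t\lambda(i,\phi_s^i(x))\,ds\big)$, and the boundary behaviour that complicates Davis's general treatment does not enter.

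First I would split $E[f(Z_t)\mid Z_0=(i,x)] = E\big[f(Z_t)\mathbf{1}_{\{T_1>t\}}\big] + E\big[f(Z_t)\mathbf{1}_{\{T_1\le t\}}\big]$. On $\{T_1>t\}$ the process has simply followed the flow, so $Z_t=(i,\phi_t^i(x))$ and this term equals $\exp\!\big(-\int_0^t\lambda(i,\phi_s^i(x))\,ds\big)\,f(i,\phi_t^i(x))$. Using the defining ODE $\frac{d}{dt}\phi_t^i(x)=F^i(\phi_t^i(x))$ with $\phi_0^i(x)=x$, Taylor expansion gives $f(i,\phi_t^i(x)) = f(i,x) + t\langle\nabla_x f(i,x),F^i(x)\rangle + o(t)$, while the exponential is $1 - t\lambda(i,x)+o(t)$; hence the first term contributes $f(i,x) + t\langle\nabla_x f(i,x),F^i(x)\rangle - t\lambda(i,x)f(i,x)+o(t)$.

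Next, on $\{T_1\le t\}$ I would further decompose according to whether exactly one or at least two jumps occur in $[0,t]$. The probability of two or more jumps is $o(t)$ (this is where the standing assumption $\mb{E}N_t<\infty$, together with mild regularity of $\lambda$, is used), and since $f$ is bounded that part is negligible. Conditionally on exactly one jump at time $s\le t$, the state just before the jump is $(i,\phi_s^i(x))$, the post-jump state is drawn from $Q\big((i,\phi_s^i(x));\cdot\big)$, and the process then flows for the remaining time $t-s$. Writing $g(s)$ for the density of $T_1$ at $s$, one has $g(s)\to\lambda(i,x)$ as $s\downarrow0$, and using continuity of $\phi$, $Q$ and $f$ to let $s$ and $t-s$ tend to $0$ inside the integral, this term equals $\int_0^t g(s)\big[\int_E f(j,y)\,Q\big((i,\phi_s^i(x));d(j,y)\big)\big]\,ds + o(t) = t\,\lambda(i,x)\int_E f(j,y)\,Q\big((i,x);d(j,y)\big) + o(t)$. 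Adding the two contributions, subtracting $f(i,x)$, dividing by $t$ and letting $t\downarrow0$ yields exactly $\langle\nabla_x f(i,x),F^i(x)\rangle + \lambda(i,x)\int_E[f(j,y)-f(i,x)]\,Q\big((i,x);d(j,y)\big)$.

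I expect the main obstacle to be the rigorous bookkeeping of the $o(t)$ remainders — in particular, controlling the multi-jump contribution and justifying the passage to the limit inside the integral against $Q(\,\cdot\,)$ — which is precisely what the regularity hypotheses in \cite{davis1984piecewise} are designed to handle. A slightly cleaner alternative that isolates this difficulty is to first establish, again by conditioning on the first jump, the Volterra-type integral equation $P_tf(i,x)=\bar F(t)f(i,\phi_t^i(x))+\int_0^t \bar F(s)\lambda(i,\phi_s^i(x))\,[QP_{t-s}f](i,\phi_s^i(x))\,ds$ with $\bar F(s)=\exp\!\big(-\int_0^s\lambda(i,\phi_u^i(x))\,du\big)$, and then differentiate it at $t=0$; this converts the whole problem into one application of the fundamental theorem of calculus plus continuity of $s\mapsto P_{t-s}f$.
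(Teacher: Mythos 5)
The paper offers no proof of this proposition at all: it is quoted verbatim (minus the boundary condition) from Theorem~5.5 of \cite{davis1984piecewise}, so any comparison is really with Davis's argument. Your route is genuinely different and more elementary: Davis characterizes the \emph{extended} generator by showing that $f(z_t)-f(z_0)-\int_0^t\mc{A}f(z_s)\,ds$ is a local martingale, working with the compensator of the jump measure and a precise domain description, whereas you compute the strong generator pointwise at interior points by conditioning on the first jump time and Taylor-expanding. Your decomposition into $\{T_1>t\}$, exactly one jump, and $\ge 2$ jumps is sound, and your Volterra-equation variant is a clean way to package it. Two caveats are worth making explicit. First, the claim that $\mb{P}\{N_t\ge 2\}=o(t)$ does not follow from $\mb{E}N_t<\infty$ alone; you need $\lambda$ to be locally bounded along the flow started from any point in the support of $Q((i,x);\cdot)$, so that the conditional probability of a second jump in the remaining time $t-s$ is $O(t)$ uniformly over the post-jump location. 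In the samplers of this paper that holds because $Q$ fixes $x$ and preserves $\|v\|$, but it is an assumption, not a consequence of $\mb{E}N_t<\infty$. Second, by restricting to interior points with $t^\star(z)>0$ you recover only the interior formula; Davis's theorem also carries the boundary condition $f(z)=\int_E f(y)\,Q(z;dy)$ on the active boundary, which the paper's statement suppresses but which is exactly what is needed later for the truncated-Gaussian sampler where trajectories do reach $\partial M_i$. In short: your argument correctly derives the displayed formula for nice $f$ at interior points, and buys self-containedness and transparency, while Davis's martingale approach buys the full domain characterization and the boundary behaviour.
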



\section{A Family of Bouncy Hybrid MCMC Samplers} \label{sec:hmc}

This section formalizes the proposed family of samplers. Denote a target $d$-dimensional density as $\pi(x)$, $x\in\mathbb{R}^d$. We assume $\pi$ is continuously differentiable on its domain. Denote as $U(x)=-\log\pi(x)$, the negative log-density of the target distribution.

Building on the background and notation from Section \ref{sec:background}, consider the piecewise deterministic Markov process on $\mb{R}^{2d}$ ($|\mc{I}| = 1$) defined as follows.
\begin{enumerate}
\item The \textit{flow} is defined by the following system of differential equations:
\begin{equation} \label{eq:flow:equations}
\begin{aligned}
	\begin{split}
	\dot{x} &= v,\\
	\dot{v} &= -\nabla U(x) + g(x),
	\end{split}
\end{aligned}
\end{equation}
where $g: \mb{R}^d \rightarrow \mb{R}^d$ is a general vector field on $\mb{R}^d$.
\item The \textit{rate function} is given as
\begin{align*}
	\lambda(x,v) = \max\left\{0, \langle v,g(x)\rangle\right\} + \lambda_0,
\end{align*}
where $\lambda_0$ is a constant called the \textit{refreshment rate}.

\item With probability $\frac{\lambda_0}{\lambda(x,v)}$ we refresh the velocity, i.e.~draw the new velocity from a standard normal distribution in $d$ dimensions. With probability $1-\frac{\lambda_0}{\lambda(x,v)}$, the velocity gets updated using the kernel $Q$. 
The \textit{jump/transition kernel} $Q$ is a deterministic kernel which maps $(x,v)$ to $(x,R(x)v)$, with 
\begin{equation} \label{eq:kernel:R}
	R(x) v = \left( I_d - 2 \frac{g(x)g(x)^\top}{\|g(x)\|^2}\right) v = v - 2 \frac{\langle v, g(x)\rangle}{\|g(x)\|^2} g(x).
\end{equation}
\end{enumerate}

Assuming there are no refreshments ($\lambda_0=0$), the updates above can be interpreted as follows. Imagine a particle $x\in \mb{R}^d$ moving in an environment with the kinetic energy function $K(v) = v^\top v/2$ and the potential energy function $U(x)$, in the presence of an (vector field of) external force $g(x)$. The Hamiltonian corresponding to the kinetic and potential energy functions $K$ and $U$ is $H(x,v)=U(x)+K(v)$. The infinitesimal rate of change in the Hamiltonian, is then given by $\langle v,g(x)\rangle$. Then, $\max\{0,\langle v,g(x)\rangle\}$ is the rate of the work done (power) by the external force to ``climb'' the ``energy hill.'' The jump time can be interpreted as the time at which the total energy spent on climbing the energy hill reaches a priori sampled (exponentially distributed) ``energy budget.'' The jump kernel corresponds to an elastic collision against the infinitely heavy imaginary ``wall'' perpendicular to $g(x)$. Introducing refreshments into this interpretation is straightforward.

\begin{remark}
In an independent earlier work, \cite[Section 2.4]{vanetti2017piecewise} also proposed the algorithm above. At the time when the first version of our work came out we were unaware of their work. 	
\end{remark}

A simple application of Proposition \ref{prop:generator} gives the infinitesimal generator of the described process as 
\begin{equation} \label{eq:inf:gen}
\begin{aligned}
	\mc{A}f =& \left\langle\nabla_x f , v\right\rangle + \left\langle\nabla_v f , -\nabla U(x) + g(x)\right\rangle - \lambda(x,v) f(x,v) \\
	& +  \max\left\{0, \langle v,g(x)\rangle\right\} f(x,R(x)v) + \lambda_0 \int\limits_{v^\prime\in\mathbb{R}^d} f(x,v^\prime) \psi_d(d v^\prime),
\end{aligned}
\end{equation}
where $\psi_d$ is the density of $d$-dimensional standard normal distribution. The following proposition shows the target distribution $\pi(x)$ is stationary for the defined piecewise deterministic process.

\begin{prop}\textnormal{\textbf{[Invariant density of BHS]}}\label{prop:invariance}
	Under mild regularity conditions, for instance continuous differentiability of the energy function $U$ and integrability of $g$, the measure defined with the density $\rho(x,v) = \pi(x) \psi_d(v)$ is a stationary measure for the process $\left\{(X_t,V_t):t\geq 0\right\}$.
\end{prop}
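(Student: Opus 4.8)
The plan is to verify the hypothesis of Proposition~\ref{prop:InvarianceThm}, namely that $\int_{\mb{R}^{2d}} \mc{A}f(x,v)\,\rho(x,v)\,dx\,dv = 0$ for every $f$ in a sufficiently rich class of test functions (for concreteness, continuously differentiable $f$ with compact support, or Schwartz functions). Since $|\mc{I}|=1$ and $\mb{R}^{2d}$ has no boundary, $t^\star\equiv\infty$ and the generator is exactly the one in \eqref{eq:inf:gen}. I would split $\int \mc{A}f\,\rho$ into the \emph{transport part}, coming from $\langle \nabla_x f,v\rangle + \langle \nabla_v f, -\nabla U(x)+g(x)\rangle$, and the \emph{jump part}, coming from the remaining three terms, and show that each contributes exactly one copy of $\pm\int f(x,v)\,\langle v,g(x)\rangle\,\rho(x,v)\,dx\,dv$, with opposite signs, so that they cancel.

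For the transport part I would integrate by parts separately in $x$ and in $v$, using that $\rho(x,v)=e^{-U(x)}\psi_d(v)$ satisfies $\nabla_x\log\rho = -\nabla U(x)$ and $\nabla_v\log\rho = -v$, together with the fact that the field $v$ is constant in $x$ and the field $-\nabla U(x)+g(x)$ is constant in $v$ (so no divergence terms appear). The $x$-integration by parts yields $\int f(x,v)\,\langle v,\nabla U(x)\rangle\,\rho\,dx\,dv$ and the $v$-integration by parts yields $\int f(x,v)\,\langle -\nabla U(x)+g(x),v\rangle\,\rho\,dx\,dv$; adding them, the $\langle v,\nabla U\rangle$ contributions cancel and the transport part equals $\int f(x,v)\,\langle v,g(x)\rangle\,\rho(x,v)\,dx\,dv$.

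For the jump part, first observe that the two refreshment terms $-\lambda_0 f(x,v)$ and $\lambda_0\int f(x,v')\psi_d(dv')$ integrate against $\rho$ to $-\lambda_0\int f\rho$ and $+\lambda_0\int f\rho$ respectively, because $\rho$ factorizes as $\pi(x)\psi_d(v)$ and $\psi_d$ is a probability density; hence they cancel. What remains, after writing $\lambda(x,v)=\max\{0,\langle v,g(x)\rangle\}+\lambda_0$, is $\int\bigl[-\max\{0,\langle v,g(x)\rangle\}\,f(x,v) + \max\{0,\langle v,g(x)\rangle\}\,f(x,R(x)v)\bigr]\rho\,dx\,dv$. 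Here I would use the key properties of the Householder reflection $R(x)$ from \eqref{eq:kernel:R}: for each fixed $x$ it is an involutive linear isometry, so $R(x)^2=I_d$, $|\det R(x)|=1$, $\psi_d(R(x)v)=\psi_d(v)$, and $\langle R(x)v,g(x)\rangle = -\langle v,g(x)\rangle$. Performing the change of variables $v\mapsto R(x)v$ (for fixed $x$) in the bounce term turns it into $\int \max\{0,-\langle v,g(x)\rangle\}\,f(x,v)\,\rho\,dx\,dv$, and then the elementary identity $\max\{0,-a\}=\max\{0,a\}-a$ collapses the remaining jump part to $-\int f(x,v)\,\langle v,g(x)\rangle\,\rho(x,v)\,dx\,dv$. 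This exactly cancels the transport part, so $\int \mc{A}f\,\rho = 0$ and Proposition~\ref{prop:InvarianceThm} gives the claim.

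The main obstacle is not the algebra but the analytic bookkeeping: one must justify that all the integrals appearing are finite and that the boundary terms at infinity in the integrations by parts vanish. This is precisely where the stated regularity assumptions enter — continuous differentiability of $U$ makes the flow \eqref{eq:flow:equations} well posed and the generator valid on the chosen test class; the Gaussian decay of $\psi_d(v)$ combined with $f$ having compact support (or rapid decay) kills the boundary terms; and integrability of $g$, together with local integrability of $\nabla U$, ensures the products $f\,\langle v,g(x)\rangle\,\rho$ and $f\,\langle v,\nabla U(x)\rangle\,\rho$ are integrable so that Fubini's theorem and the change of variables are legitimate. A secondary point worth checking is that $\lambda$ is measurable and the assumption $\mb{E}N_t<\infty$ holds, so that Proposition~\ref{prop:generator} genuinely applies; this was already noted in the preceding discussion.
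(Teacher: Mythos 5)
Your proposal is correct and follows essentially the same route as the paper's proof: integration by parts on the transport terms to produce $+\int f\,\langle v,g(x)\rangle\,d\rho$, trivial cancellation of the $\lambda_0$ terms, and the change of variables $v\mapsto R(x)v$ combined with $\langle R(x)v,g(x)\rangle=-\langle v,g(x)\rangle$ and $\max\{0,-a\}-\max\{0,a\}=-a$ to produce the cancelling $-\int f\,\langle v,g(x)\rangle\,d\rho$. The only difference is that you spell out the integrability and boundary-term justifications more explicitly than the paper does, which is a refinement rather than a different argument.
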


\begin{proof}
To prove that $\rho(x,v) = \pi(x) \psi_d(v)$ is the invariant density, by Proposition \ref{prop:InvarianceThm}, it suffices to verify that
\begin{equation} \label{eqn:ForwardEqnGeneral}
	\int\limits_{(x,v)\in\mathbb{R}^{2d}} \mc{A}f(x,v) d\rho(x,v) = 0,
\end{equation}
for all $f \in \mc{D}(\mc{A})$. Using \eqref{eq:inf:gen}, this translates to
\begin{equation} \label{eqn:InvarianceGeneralSampler}
\begin{aligned}
	&\int\limits_{(x,v)\in\mathbb{R}^{2d}} \bigg(\left\langle \nabla_x f,v\right\rangle+\left\langle\nabla_v f , -\nabla U(x) + g(x)\right\rangle+\max\left\{0, \langle v,g(x)\rangle\right\} f(x,R(x)v) \\
 	& \qquad +\lambda_0 \int\limits_{v^{\prime}\in\mathbb{R}^d} f(x,v^\prime) \psi_d(d v^\prime)-\lambda(x,v) f(x,v) \bigg) d\rho(x,v) = 0.
\end{aligned}
\end{equation}
Integration by parts yields 
\begin{equation*}
	\int\limits_{(x,v)\in\mathbb{R}^{2d}}\left\langle\nabla_x f , v\right\rangle d \rho(x,v) = \int\limits_{(x,v)\in\mathbb{R}^{2d}}\left\langle\nabla U(x), v\right\rangle f(x,v) d \rho(x,v) \qquad
	\text{ and } 
\end{equation*}
\begin{equation*}	
\int\limits_{(x,v)\in\mathbb{R}^{2d}}\left\langle\nabla_v f,-\nabla U(x)+g(x)\right\rangle d \rho(x,v) = \int\limits_{(x,v)\in\mathbb{R}^{2d}}\left\langle  v,-\nabla U(x)+g(x)\right\rangle f(x,v)d\rho(x,v).
\end{equation*}
Therefore, the first two terms in \eqref{eqn:InvarianceGeneralSampler} simplify as
\begin{equation*}
\begin{aligned}
	&\int\limits_{(x,v)\in\mathbb{R}^{2d}}\bigl( \left\langle \nabla_x f , v\right\rangle+\left\langle\nabla_v f , -\nabla U(x)+g(x)\right\rangle\bigr)d\rho(x,v) \\
	&= \int\limits_{(x,v)\in\mathbb{R}^{2d}}\left\langle  v,g(x)\right\rangle f(x,v) d\rho(x,v).
\end{aligned}
\end{equation*}
Terms involving $\lambda_0$ cancel out trivially. The remaining term is
\begin{align*}
	\int\limits_{(x,v)\in\mathbb{R}^{2d}}\max\left\{0,\langle v,g(x)\rangle\right\} \bigl(f(x,R(x)v)-f(x,v)\bigr)  d \rho(x,v),
\end{align*}
which can be written as
\begin{align*}
	&\int\max\{0, \langle v,g(x)\rangle\} f(x,R(x)v) d\rho(x,v) -\int\max\{0,\langle v,g(x)\rangle\}f(x,v) d\rho(x,v)\\
	=&\int\max\{0,\langle R(x)^\top u, g(x)\rangle\} f(x,u)  d \rho(x,u)-\int\max\{0, \langle v,g(x)\rangle\}f(x,v) d \rho(x,v)
\end{align*}
(the integrals above are either over $(x,v)\in\mathbb{R}^d\times\mathbb{R}^d$ or over $(x,u)\in\mathbb{R}^d\times\mathbb{R}^d$),
where we used the change of variables $u = R(x)v$ and the fact that $\rho(x,v)$ remains invariant as $R(x)$ is a rotation matrix.
Since $\left\langle R(x)^\top u, g(x)\right\rangle = -\left\langle u, g(x)\right\rangle$, which holds because $R(x)$ is the reflection against the hyperplane perpendicular to $g(x)$, the above further equals
\begin{equation*}
\begin{aligned}
	&\int\limits_{(x,v)\in\mathbb{R}^{2d}} \bigl(\max\left\{0, -\langle v,g(x)\rangle\right\}-\max\left\{0,\langle v,g(x)\rangle\right\}\bigr) f(x,v) d \rho(x,v)\\
	=& -\int\limits_{(x,v)\in\mathbb{R}^{2d}}\left\langle v,g(x)\right\rangle f(x,v) d\rho(x,v).
\end{aligned}
\end{equation*}
 These together prove \eqref{eqn:ForwardEqnGeneral}.
\end{proof}

\begin{remarks} 
$\;$
\begin{itemize}[leftmargin=*]
\item Note that the proposition above still holds if we make any of the following changes to the sampler.
	\begin{itemize}
	\item Given a function $\gamma:\mathbb{R}^d\times\mathbb{R}^d$ satisfying $\gamma\left(x,R(x)^\top v\right)=\gamma(x,v)$, we can modify the rate function so that it becomes $\lambda(x,v)+\gamma(x,v)$ with $\lambda(x,v)$ given above.
	\item $R(x)$ can be any rotation matrix satisfying $\left\langle R(x)^\top u, g(x)\right\rangle=-\left\langle u, g(x)\right\rangle$. 
	\end{itemize}
	Further modifications and generalizations are given in Section \ref{sec:extensions}.
	
\item The ideas in \cite{afshar2015reflection} can be applied to here as well. This will particularly introduce a Bouncy Hybrid Sampler for energy functions with discontinuities.	
\end{itemize}
\end{remarks}

\subsection{Special Instances of $g$}

We explain how some of the existing samplers are the special instances of the given class of samplers.

\begin{itemize}
\item \textbf{Randomized Hamiltonian Monte Carlo.} For $g(x)=0$, the piecewise deterministic algorithm described becomes the  Randomized Hamiltonian Monte Carlo (RHMC) of \cite{bou2017randomized}. In this case, the flow equations in \eqref{eq:flow:equations} exactly correspond to the Hamiltonian dynamics.
The rate function $\lambda(x,v)$ becomes a constant, hence there is no bouncing of the particle but only refreshment. At every step of the algorithm, the moving time $\tau$ becomes a random variable drawn from $\textnormal{Exp}(\lambda_0)$. Thus at every step, the duration of the Hamiltonian flow is an independent exponential random variable unlike in the classical Hamiltonian MCMC where the duration of the Hamiltonian flow is fixed in advance \citep{duane1987hybrid, neal2011mcmc}. The velocity at every step gets drawn from the standard normal distribution in $d$ dimensions, corresponding to the complete momentum randomization in the RHMC. By slightly modifying the velocity update after refreshment events, we get that the proposed Bouncy Hybrid Sampler completely generalizes RHMC (Section \ref{sec:RHMC:velocity}).
\vspace{1em}

\item \textbf{Bouncy Particle Sampler.} 
For $g(x)=\nabla U(x)$, the proposed piecewise deterministic MCMC becomes the Bouncy Particle Sampler of \cite{bouchard2015bouncy}. In this case, the particle moves along straight lines with constant velocity $v$ in between bouncing or refreshments events. The duration time along each piecewise linear path is a modeled as the first arrival time of a Poisson process with rate $\lambda(x,v)$, where $x$ and $v$ are the current position and time of the moving particle. 

Note that the refreshment events ($\lambda_0>0$) are needed for the BPS to be ergodic. It is possible to remove the refreshments and still keep the ergodicity of the chain by introducing the stochastic velocity update \citep{wu2017generalized}. We elaborate more on this modification in Section \ref{sec:general:BPS}.

\end{itemize}

\subsection{Exactly Solvable Flows}

When simulating the flow numerically, we need a symmetric flow because, in general, computing the Metropolis-Hastings filter is impossible, as the transition kernel is hard to compute (practically impossible). In general, the leapfrog integrator can be used, it is time-reversible and volume preserving. The problem with Metropolis correction is the non-reversibility of the jump process.
This advocates for the choices with exactly solvable flows, like the Quadratic Bouncy Hybrid Sampler introduced in Section \ref{sec:tmvg}. 

When there are explicit formulas available for the deterministic flows, sampling of the jump process is studied by \citet{lemaire2016exact} and a method is suggested for exact sampling of the jump process. Their algorithm is based on the famous thinning procedure for simulation of inhomogeneous Poisson processes, developed in \cite{lewis1979simulation}. We elaborate on computational aspects further in the rest of the paper.


\section{BHS Algorithm} \label{sec:algorithm}

Before presenting the specific examples of the sampler, we describe the algorithm in detail. Given the gradient $\nabla U$ of the negative logarithm of the target density in $d$ dimensions, the parameters of the algorithm are: a refreshment constant $\lambda_0$, function $g:\mathbb{R}^d\rightarrow\mathbb{R}^d$, the total time of the process $T_{total}$ and the discretizing time interval length $\delta$.
Denote the current time as $t_{curr}$, where initially $t_{curr}\leftarrow 0$ and the current position and velocity as $(X_{t_{curr}}, V_{t_{curr}})$. While $t_{curr}<T_{total}$, the algorithm consists of repeating the following steps.

\begin{enumerate}[leftmargin=*]
	\item \label{item:app:flow} Solve the flow equations in \eqref{eq:flow:equations} s.t.~the initial solution at $t=0$ is $\left(X_{t_{curr}},V_{t_{curr}}\right)$. This gives a unique set of functions $x_t$ and $v_t$.
	
	\item Set the rate of a Poisson process as
	\begin{equation*}
		\bar{\lambda}(x_t,v_t)=\max\left\{0, g(x_t)\cdot v_t\right\},	
	\end{equation*}
	with $(x_t,v_t)$ being the solutions from the step \ref{item:app:flow} above.
	
	\item To compute the moving time $\tau$ do the following.
	\begin{enumerate}[leftmargin=*]
	\item\label{item:app:boncing:time} Sample bounce time $\tau^B$ as the first arrival time of the Poisson process with the rate $\bar{\lambda}(x_t,v_t)$ above, i.e.~sample $\tau^B$ from the distribution satisfying
	\begin{equation*}
		\mathbb{P}\left\{\tau^B\geq t\right\}=\exp\left(-\int_0^t \left(g(x_t)^\top v_t\right)_+\right), \qquad t\geq 0.
	\end{equation*}
 
	\item Sample a refreshment time $\tau^R\sim\textnormal{Exp}(\lambda_0)$. 
	\item Take the moving time to be the minimum of the two $\tau=\min\left\{\tau^B,\tau^R\right\}$.
	\end{enumerate}
	
	\item The particle moves along $x_t$ with velocity $v_t$ for the total time of $\tau$ so that $(X_t,V_t)=\left(x_{t-t_{curr}}, v_{t-t_{curr}}\right)$, $t\in\left[t_{curr}, t_{curr}+\tau\right]$.
	 After time $\tau$, the current time gets updated according to $t_{curr}\leftarrow t_{curr}+\tau$ and the current state of the process becomes $\left(X_{t_{curr}}, V_{t_{curr}}\right)=(x_{\tau}, v_{\tau})$.
	 
	\item After moving along the deterministic flow for a random time $\tau$, the velocity gets updated depending on which of the following two events happened. 
	\begin{enumerate}
	\item If $\tau=\tau^B$ (bouncing event), the current velocity gets updated using the jump kernel $Q$ evaluated at the current point $(X_{t_{curr}}, V_{t_{curr}})$. 
	 \item If $\tau=\tau^R$ (refreshment), the new velocity gets drawn from the standard normal distribution in $d$ dimensions. 
	 \end{enumerate}
\end{enumerate}

We elaborate on Step \eqref{item:app:boncing:time} of the algorithm above. There are several possible ways to sample the bouncing time $\tau^B$. We decide which one to use depending on the computational complexity of the sampling governed by the target density and the choice of $g$.
\begin{itemize}
\item \textbf{Inverse transform sampling.} $\tau^B$ can be computed as the solution of
\begin{equation*}
	-\log V=\int_0^{\tau^B}\left(g(x_t)^\top v_t\right)_+dt,
\end{equation*}
	where $V\sim\textnormal{Unif}(0,1)$ independent of the process so far. We use the inverse transform sampling when the target density is univariate Gaussian and $g$ is linear (Section \ref{sec:univariate:gaussian}).
	Since the above equation might be hard to solve fast in general, we might use some of the techniques presented next for other choices of $\nabla U(x)$ and $g(x))$.
 
\item \textbf{Thinning method of \cite{lewis1979simulation}.} Assume we have an upper bound $\Lambda(t)$ on the rate function, $\Lambda(t)\geq \bar{\lambda}(x_t,v_t)$, $\forall t\geq 0$, called thinning proposal, for which sampling the arrival times $\tau_1,\tau_2,\ldots$ is easy. Delete the time $\tau_k$ with probability $1-\frac{\bar{\lambda}(x_{\tau_k},v_{\tau_k})}{\Lambda(\tau_k)}$ for each $k\geq 1$. The smallest $k$ for which $\tau_k$ was not deleted becomes the bouncing time $\tau^B$. We illustrate this method when sampling from the truncated multivariate normal distribution in Section \ref{sec:thinning}, where we derive a constant bound on the rate function.

\item \textbf{Approximate thinning proposal of \cite{pakman2016stochastic}.} Since a possible upper bound on the rate function $\bar{\lambda}(x_t,v_t)$ might be conservative, the thinning method might be slow. \cite{pakman2016stochastic} propose an adaptive approximate thinning proposal rate, which introduces small bias but provides computational gain. 
\end{itemize}

Since our proposed Bouncy Hybrid Sampler is a continuous-time Markov chain, it is worth explaining how to get the discrete samples from the simulated trajectory of the chain. Denote the positions of the chain as $X_t$, $0\leq t\leq T_{total}$ and denote the skeleton of the simulated trajectory as $\left(X^{(i)}, V^{(i)}\right)$, $i=1,\ldots, n$, representing points at which the particle switches trajectories. In other words, these points present the current position and velocity of the particle after running steps 1-4 of the algorithm above. Since the piecewise deterministic process above is a continuous process, we cannot only use the skeleton of the above algorithm for estimation. 
More specifically, for estimating $\int_{x\in\mathbb{R}^d}\phi(x)\pi(x)dx$ for a given function $\phi$, the estimator $\frac{1}{n}\sum_{i=1}^n \phi(X^{(i)})dt$ is biased unlike the following $\frac{1}{T_{total}}\int_0^{T_{total}} \phi(X_t)dt.$
When the latter integral is not tractable, we divide $\left[0,T_{total}\right]$ in regular time intervals of fixed length $\delta>0$ to obtain the estimator
\begin{equation*}
	\frac{1}{N}\sum_{i=1}^N\phi(X_{i\delta}),	
\end{equation*}
where $N=\lfloor T_{total}/\delta\rfloor$. This introduces another parameter $\delta$ in our algorithm. Practically, the estimator above is pretty robust to the choices of $\delta$.

\subsection{Univariate Gaussian} \label{sec:univariate:gaussian}

To illustrate the above algorithm, we present the details for sampling from $\mathcal{N}(0,1)$ distribution and a linear function $g$.
In this example, $g(x)=ax$, $a\in\mathbb{R}$, and $U(x)=\frac{x^2}{2}$, thus the system of differential equations becomes 
\begin{equation*}
\begin{aligned}
	\dot{x}_t&=v_t \\
	\dot{v}_t&=-x_t+ax_t,
\end{aligned}	
\end{equation*}
implying $\ddot{x}_t-x_t(a-1)=0$. We have the following solution for $x_t$
\begin{equation*}
x_t = \begin{cases}
	C_1e^{t\sqrt{a-1}}+C_2e^{-t\sqrt{a-1}} 
	& \textnormal{ for } a>1 \\
	C_1\cos(t\sqrt{1-a})+C_2\sin(t\sqrt{1-a})  
	& \textnormal{ for } a<1 \\
	C_1t+C_2 
	& \textnormal{ for } a=1
\end{cases}	
\end{equation*}
and $v_t$ 
\begin{equation*}
v_t = \begin{cases}
	C_1\sqrt{a-1}e^{t\sqrt{a-1}}-C_2\sqrt{a-1}e^{-t\sqrt{a-1}} 
	& \textnormal{ for } a>1 \\
	-C_1\sqrt{1-a}\sin(t\sqrt{1-a})+C_2\sqrt{1-a}\cos(t\sqrt{1-a})  
	& \textnormal{ for } a<1 \\
	C_1 
	& \textnormal{ for } a=1	.
\end{cases}
\end{equation*}
Taking into account the initial condition $x_0$ and $v_0$, we have
\begin{equation*}
	(C_1,C_2)=\begin{cases}
		\left(x_0+\frac{v_0}{\sqrt{a-1}}, x_0-\frac{v_0}{\sqrt{a-1}}\right) & \textnormal{ for } a>1 \\
		\left(x_0, \frac{v_0}{\sqrt{1-a}}\right) & \textnormal{ for }a<1 \\
		(v_0, x_0)& \textnormal{ for } a=1.		
	\end{cases}	
\end{equation*}

The bouncing time is $\tau^B$ for which
\begin{equation*}
	-\log V = \int_0^{\tau^B}(v_t\cdot g(x_t))_+dt=\int_0^{\tau^B}\left(ax_t\cdot\frac{dx_t}{dt}\right)_+dt,
\end{equation*}
where $V\sim\textnormal{Unif}(0,1)$.
We compute $\tau^B$ separately for each $a>1$, $a<1$ and $a=1$.
\begin{enumerate}[leftmargin=*]
	\item $a>1$. $\tau^B$ solves
	\begin{equation*}
	-\log V=\int_0^{\tau^B}a\sqrt{a-1}\left(C_1^2e^{2t\sqrt{a-1}}-C_2^2e^{-2t\sqrt{a-1}}\right)_+dt.
	\end{equation*}
	Function $a\sqrt{a-1}\left(C_1^2e^{2t\sqrt{a-1}}-C_2^2e^{-2t\sqrt{a-1}}\right)$ above is an increasing function (first derivative positive) and achieves zero at $t_0=\frac{\ln (C_2^2/C_1^2)}{4\sqrt{a-1}}$.
	Then $\tau^B\geq \max\{t_0,0\}={t_0}_+$ solves the following
	\begin{equation*}
		-\log V+\frac{a}{2}\left(C_1^2e^{2{t_0}_+\sqrt{a-1}}+C_2^2e^{-2{t_0}_+\sqrt{a-1}}\right)=\frac{a}{2}\left(C_1^2e^{2\tau^B\sqrt{a-1}}+C_2^2e^{-2\tau^B \sqrt{a-1}}\right).
	\end{equation*}

	\item $a<1$. The path and velocity functions $x_t$ and $v_t$ can be written as
	\begin{equation*}
	\begin{aligned}
		x_t&=r\cos\left(c+t\sqrt{1-a}\right) \\
		v_t&=-r\sqrt{1-a}\sin\left(c+t\sqrt{1-a}\right),
	\end{aligned}
	\end{equation*}
	where $r=\sqrt{C_1^2+C_2^2}$ and $c$ satisfies $\cos c=\frac{C_1}{r}$ and $\sin c=\frac{-C_2}{r}$.
	The bouncy time $\tau^B$ then solves
	\begin{equation*}
		-\log V=\int_0^{\tau^B}\left(\frac{-ar^2\sqrt{1-a}}{2}\sin(2c+2t\sqrt{1-a}))\right)_+dt.
	\end{equation*}
	The period of the sine function above is $\pi/\sqrt{1-a}$.
	Since
	\begin{equation*}
	\begin{aligned}
		&\int_0^{\frac{\pi}{\sqrt{1-a}}} \left(\sin(2c+2t\sqrt{1-a})\right)_+dt 
		=\frac{1}{\sqrt{1-a}},
	\end{aligned}
	\end{equation*}
	we have that the integral of the rate function across one period is
	\begin{equation*}
	\int_0^{\frac{\pi}{\sqrt{1-a}}}	\left(\frac{-ar^2\sqrt{1-a}}{2}\sin(2c+2t\sqrt{1-a}))\right)_+dt = \frac{|a|r^2}{2}.
	\end{equation*}
	Thus, $\tau^B$ is the solution of
	\begin{equation} \label{eq:gaussian:leftover:integral}
	-\log V-n\frac{|a|r^2}{2}=\int_{n\frac{\pi}{\sqrt{1-a}}}^{\tau^B}\left(\frac{-ar^2\sqrt{1-a}}{2}\sin(2c+2t\sqrt{1-a}))\right)_+dt
	\end{equation}
	for $n=\lfloor\frac{-\log V}{|a|r^2/2}\rfloor$. Denote the RHS of the above equation as $L$ and the integrand on the LHS above without the positive part as $h(t)=\frac{-ar^2\sqrt{1-a}}{2}\sin(2c+2t\sqrt{1-a}))$.
	To solve the equation \eqref{eq:gaussian:leftover:integral}  with respect to $\tau^B$, we find the two zeros $t_1$ and $t_2$ of the function $\sin(2c+2t\sqrt{1-a})$ that are in between $\frac{n\pi}{\sqrt{1-a}}$ and $\frac{(n+1)\pi}{\sqrt{1-a}}$. The zeros of $\sin(2c+2t\sqrt{1-a})$ are of form $t_1=\frac{\pi k_1-2c}{2\sqrt{1-a}}$ and $t_2 = \frac{\pi k_2-2c}{2\sqrt{1-a}}$, $k_1, k_2\in\mathbb{Z}$. $k_1=\lceil{2n+\frac{2c}{\pi}}\rceil$ and $k_2=k_1+1$. We compute the integral 
	\begin{equation*}
		I_p=\int_{\frac{n\pi}{\sqrt{1-a}}}^{t_1}\left(\frac{-ar^2\sqrt{1-a}}{2}\sin(2c+2t\sqrt{1-a}))\right)dt.	
	\end{equation*}
	Depending on the value of the integral above we differentiate between three cases to compute $\tau^B$.
	\begin{enumerate}[leftmargin=*]
	\item $I\leq 0$. In this case $h(t)$ is positive for $t\in[t_1, t_2]$, thus $\tau^B\in[t_1, t_2]$ solves
		$\int_{t_1}^{\tau^B}h(t)dt=L.$
	
	\item $I\leq L$. In this case $h(t)$ is positive for $t\in\left[\frac{n\pi}{\sqrt{1-a}}, t_1\right]$ and $\tau^B\in\left[\frac{n\pi}{\sqrt{1-a}}, t_1\right]$ solves
	$\int_{\frac{n\pi}{\sqrt{1-a}}}^{\tau^B}h(t)dt=L.$

	\item $I\geq L$. In this case, the solution $\tau^B\in\left[t_2,\frac{(n+1)\pi}{\sqrt{1-a}}\right]$ satisfies $\int_{t_2}^{\tau^B}h(t)=L-I.$	

	\end{enumerate}

	\item $a=1$. $\tau^B$ solves
	\begin{equation*}
		-\log V=\int_0^{\tau^B}\left(C_1^2t+C_1C_2\right)_+dt	
	\end{equation*}
	Function $C_1^2t+C_1C_2$ is increasing and achieves zero at $t_0=-\frac{C_2}{C_1}$. Then $\tau^B\geq\max\{t_0,0\}={t_0}_+$ solves the quadratic equation
	\begin{equation*}
		-\log V+\frac{C_1^2}{2}{t_0}_+^2+C_1C_2{t_0}_+= \frac{C_1^2}{2}(\tau^B)^2+C_1C_2\tau^B.
	\end{equation*}	
\end{enumerate}

To illustrate the convergence of the samples we get by running the above sampler to the standard normal distribution, we plot the histograms of the samples for two different functions $g$, $g(x)=-x$ and $g(x)=x$ (Figure \ref{fig:univariate:gaussian}).

\begin{center}
\begin{figure}[h!]
\begin{subfigure}{.5\textwidth}
  \centering
  \includegraphics[width=\linewidth]{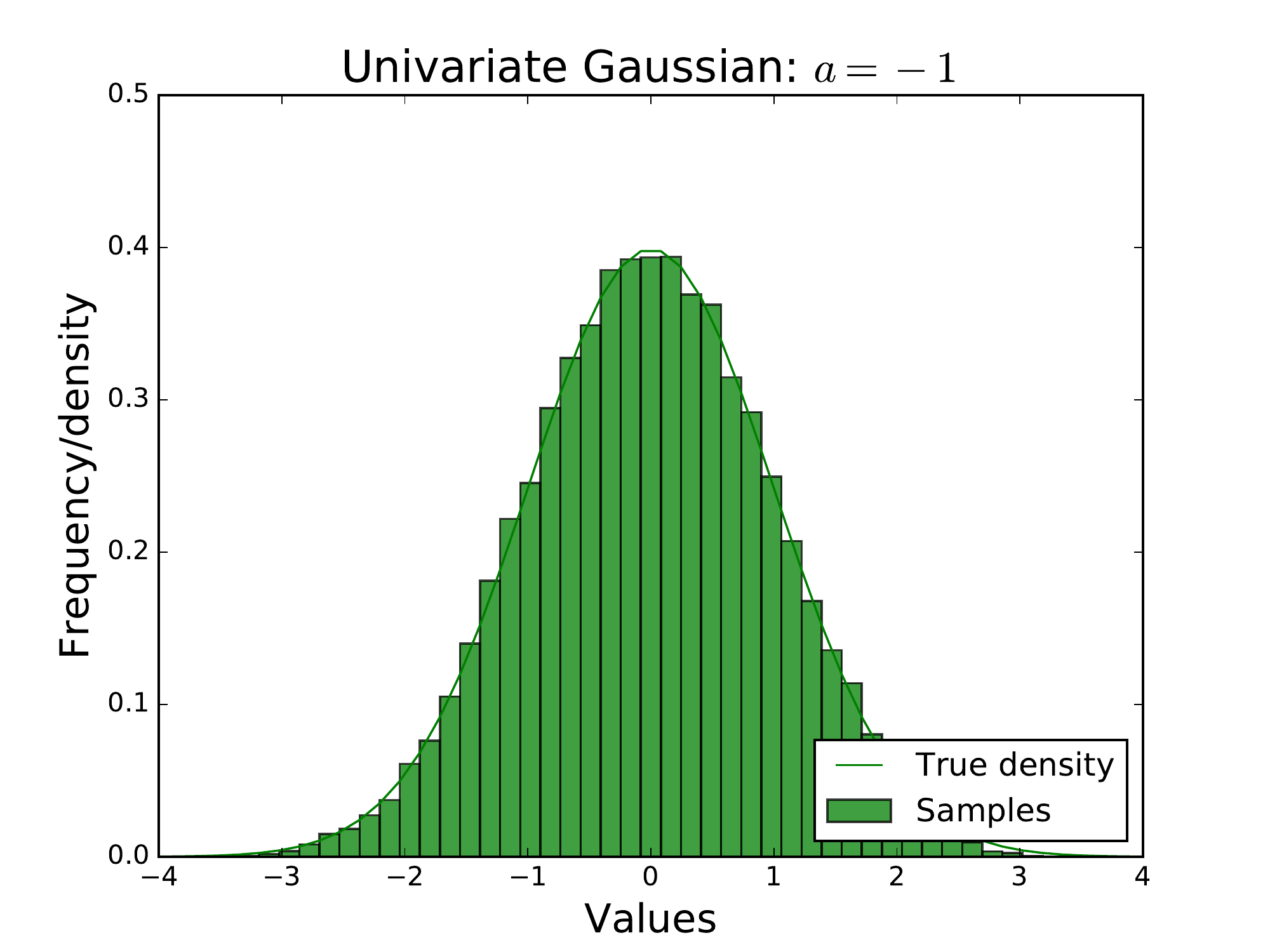}
  \caption{$a=-1$}
  \label{fig:sub1}
\end{subfigure}%
\begin{subfigure}{.5\textwidth}
  \centering
  \includegraphics[width=\linewidth]{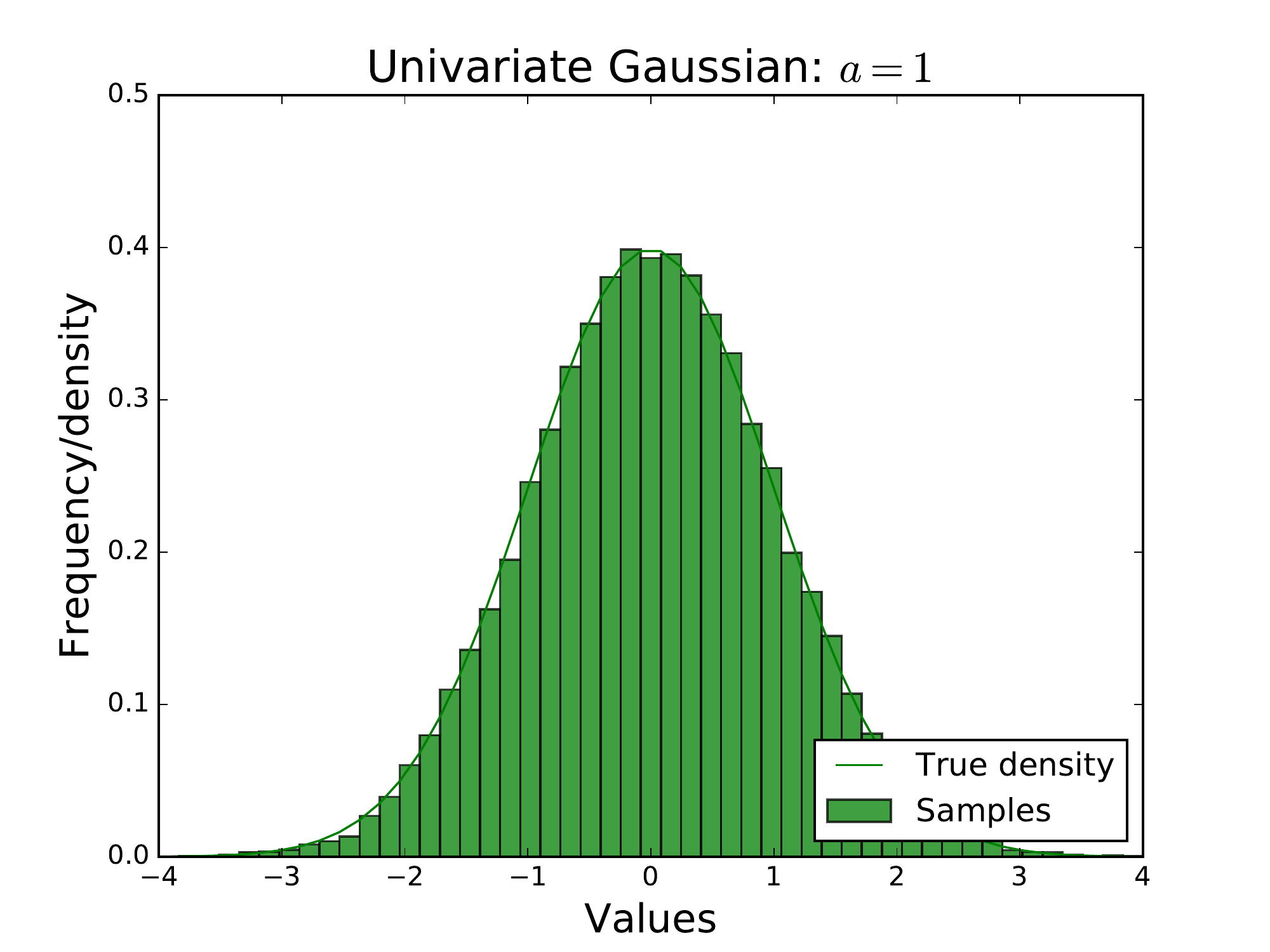}
  \caption{$a=1$}
  \label{fig:sub2}
\end{subfigure}
\caption{Histograms of samples by running BHS chain for two different linear functions $g(x)=-x$ (left) and $g(x)=x$ (right).}
\label{fig:univariate:gaussian}
\end{figure}
\end{center}


\section{Quadratic Bouncy Hybrid Sampler} \label{sec:tmvg}

We elaborate on a special instance of the proposed Bouncy Hybrid Sampler specialized for sampling $X\sim\mathcal{N}(\mu,\Sigma)$, $\mu\in\mathbb{R}^d$ and $\Sigma\in\mathbb{R}^{d\times d}$, in $d$ dimensions with affine constrains 
\begin{equation*}
	F^\top X+h\geq 0,
\end{equation*}
where $F\in\mathbb{R}^{d\times m}$ and $h\in\mathbb{R}^m$ for some number of constraints $m$. We call the derived MCMC the Quadratic Bouncy Hybrid Sampler (QBHS).

Previously, this problem was considered by \cite{pakman2014exact} who provide the exact solution for the Hamiltonian dynamics corresponding to a quadratic energy function. Similarly to their sampler, our flow equations have the exact solution. Unlike their sampler, the duration of each flow in our case is random.

\subsection{Sampling from Unrestricted Multivariate Normal Distribution} 

We start by explaining how to run the BHS for unrestricted $\mathcal{N}(\mu,\Sigma)$ distribution. 
In this case the gradient of the negative log-density is $\nabla U(x)=\Sigma^{-1}(x-\mu)$ so the system becomes
\begin{equation*}
\begin{aligned}
	\dot{x}&=v\\
	\dot{v}&=-\Sigma^{-1}(x-\mu)+g(x)
\end{aligned}	
\end{equation*}
with the initial solution $x_0$ and $v_0$.
This implies $\ddot{x}=-\Sigma^{-1}x+g(x)+\Sigma^{-1}\mu$. Take 
\begin{equation*}	
	g(x)=Ax
\end{equation*}
for a matrix $A\in\mathbb{R}^{d\times d}$, where $\Sigma^{-1}-A$ is a diagonalizable matrix, hence can be written as $\Sigma^{-1}-A=-P^{-1}A_dP$ with $A_d$ diagonal matrix with non-zero elements on its diagonal and $P$ invertible. Denote the non-zero diagonal elements of $A_d$ as $a_1, \ldots, a_d$.
Then the differential equation becomes $\ddot{x}-P^{-1}A_dPx-\Sigma^{-1}\mu=0$, or equivalently $P\ddot{x}-A_dPx-P\Sigma^{-1}\mu=0$. Changing the variables $y=Px$, the equation becomes $\ddot{y}-A_dy-P\Sigma^{-1}\mu=0$. The solution $y=y_t=(y_{1,t}, \ldots, y_{d,t})$ is given by
\begin{equation*}
	y_{k,t}
	=\begin{cases} C_{k,1}\cos(\sqrt{-a_k}t)+C_{k,2}\sin(\sqrt{-a_k}t)-\frac{1}{a_k}(P\Sigma^{-1}\mu)_k & \textnormal{ for } a_k<0 \\
	C_{k,1}e^{\sqrt{a_k}t}+C_{k,2}e^{-\sqrt{a_k}t}-\frac{1}{a_k}(P\Sigma^{-1}\mu)_k & \textnormal{ for }a_k>0, 
\end{cases}
\end{equation*}
and $\dot{y}$ is given by
\begin{equation*}
	\dot{y}_{k,t}=\begin{cases} -C_{k,1}\sqrt{-a_k}\sin(\sqrt{-a_k}t)+C_{k,2}\sqrt{-a_k}\cos(\sqrt{-a_k}t) & \textnormal{ for } a_k<0 \\
	C_{k,1}\sqrt{a_k}e^{\sqrt{a_k}t}-C_{k,2}\sqrt{a_k}e^{-\sqrt{a_k}t} &\textnormal{ for }a_k>0,
\end{cases}
\end{equation*}
for $k=1,\ldots, d$, where $(C_{k,1}, C_{k,2})$ are chosen to satisfy the initial condition $y_0=Px_0$ and $\dot{y}_0=Pv_0$. Hence,
\begin{equation} \label{eq:tmvg:constants:C}
	(C_{k,1}, C_{k,2})=\begin{cases} \left((Px_0)_k+\frac{1}{a_k}(P\Sigma^{-1}\mu)_k, \frac{1}{\sqrt{-a_k}}(Pv_0)_k\right)
	& \textnormal{ for }a_k<0 \\
	\frac{1}{2}\left((Px_0)_k+\frac{1}{a_k}(P\Sigma^{-1}\mu)_k\pm\frac{1}{\sqrt{a_k}}(Pv_0)_k \right)
	& \textnormal{ for }a_k>0,
	\end{cases}	
\end{equation}
where in the case of $a_k>0$, $C_{k,1}$ takes the plus sign and $C_{k,2}$ takes the minus sign above.

Given the solution for $y$ and $\dot{y}$, the solution for the position and velocity becomes $(x,v)=\left(P^{-1}y, P^{-1}\dot{y}\right)$. The rate function $\bar{\lambda}(x_t,v_t)=\left(g(x_t)^\top v_t\right)_+$ is
\begin{equation*} 
	\bar{\lambda}(x_t,v_t)=\left(y_t^\top {P^{-1}}^\top A^\top P^{-1}\dot{y}_t\right)_+.
\end{equation*}
Since sampling $\tau^B$ via inverse transform is hard in this case, we compute $\tau^B$ using the thinning method as follows.

\subsection{Thinning Method} \label{sec:thinning}

To employ the thinning method to sample $\tau^B$ we need an upper bound on the rate function $\bar{\lambda}(x_t,v_t)$. We use the following upper bound
\begin{equation*}
	\bar{\lambda}(x_t,v_t)\leq \|y_t\|_2\left\|{P^{-1}}^\top A^\top P^{-1}\right\|_2\|\dot{y}_t\|_2,
\end{equation*}
further noting
\begin{equation*}
\begin{aligned}
	\left(C_{k,1}\cos(\sqrt{-a_k}t)+C_{k,2}\sin(\sqrt{-a_k}t)\right)^2\leq \max\left\{|C_{k,1}|, |C_{k,2}|\right\}^2+\left|C_{k,1}C_{k,2}\right| = B_k, \\
	\left(-C_{k,1}\sin(\sqrt{-a_k}t)+C_{k,2}\cos(\sqrt{-a_k}t)\right)^2 \leq \max\left\{|C_{k,1}|, |C_{k,2}|\right\}^2 +\left|C_{k,1}C_{k,2}\right|.
\end{aligned}
\end{equation*}
Assuming $a_k\leq 0$ for all $k=1,\ldots, d$, we get
\begin{equation*}
\begin{aligned}
	y_{k,t}^2 &\leq B_k+2 \sqrt{B_k}\left|\frac{(P\Sigma^{-1}\mu)_k}{a_k}\right|+\frac{(P\Sigma^{-1}\mu)_k^2}{a_k^2} \\
	\dot{y}_{k,t}^2 &\leq (-a_k)B_k,
\end{aligned}
\end{equation*}
implying
\begin{equation*}
\begin{aligned}
	\|y_t\|_2^2 &=\sum_{k=1}^d y_{k,t}^2 \leq  \sum_{k=1}^d \left(B_k+2\sqrt{B_k}\left|\frac{(P\Sigma^{-1}\mu)_k}{a_k}\right|+\frac{(P\Sigma^{-1}\mu)_k^2}{a_k^2}\right)	 \\
	\|\dot{y}_t\|_2^2 &= \sum_{k=1}^d \dot{y}_{k,t}^2\leq \sum_{k=1}^d (-a_k)B_k.
\end{aligned}	
\end{equation*}
Thus, an upper bound on the rate becomes a constant (not depending on $t$)
\begin{equation*}
\begin{aligned}
	\Lambda(t)=\Lambda &= \sqrt{\sum_{k=1}^d \left(B_k+2\sqrt{B_k}\left|\frac{(P\Sigma^{-1}\mu)_k}{a_k}\right|+\frac{(P\Sigma^{-1}\mu)_k^2}{a_k^2}\right)	} \\
	& \qquad \cdot\left\|{P^{-1}}^\top A^\top P^{-1}\right\|_2
	 \cdot \sqrt{\sum_{k=1}^d (-a_k)B_k}.
\end{aligned}
\end{equation*}

\subsection{Sampling from a Truncated Normal Distribution} 

We now incorporate the constraints in the sampler. The constraints on $x$ implies the constraints on $y=Px$ are 
\begin{equation*}
	F^\top P^{-1}y+h\geq 0.
\end{equation*}
Denote the vectors $C_1=(C_{1,1},\ldots, C_{d,1})^\top\in\mathbb{R}^d$, $C_2=(C_{1,2},\ldots, C_{d,2})\in\mathbb{R}^d$, $o=A_d^{-1}(P\Sigma^{-1})\mu=\left(\frac{(P\Sigma^{-1}\mu)_1}{a_1},\ldots, \frac{(P\Sigma^{-1}\mu)_d}{a_d}\right)\in\mathbb{R}^d$, and the matrix $F^\top P^{-1}=K^\top$ with the columns of $K$ as $K_1,\ldots, K_m$. 
Assuming $a_1=\ldots=a_d=a<0$, each of the $m$ constraints can be written as
\begin{equation*}
\begin{aligned}
	&K_j^\top y+h_j = \sum_{i=1}^d K_{i,j}y_{i,t}+h_j \\
	&=\sum_{i=1}^d K_{i,j}\left(C_{i,1}\cos(\sqrt{-a_i}t)+C_{i,2}\sin(\sqrt{-a_i}t)-\frac{1}{a_i}(P\Sigma^{-1}\mu)_i\right)+h_j \\
	&=\left(\sum_{i=1}^dK_{i,j}C_{i,1}\right)\cos(\sqrt{-a}t)+\left(\sum_{i=1}^dK_{i,j}C_{i,2}\right)\sin(\sqrt{-a}t)
	-\sum_{i=1}^d K_{i,j} \frac{(P\Sigma^{-1}\mu)_i}{a_i}+h_j\\
	&=u_j\cos(\sqrt{-a}t+\phi_j)+q_j\geq 0 ,
\end{aligned}
\end{equation*}
where
\begin{equation*}
\begin{aligned}
	u_j&=\sqrt{\left(K_j^\top C_1\right)^2+\left(K_j^\top C_2\right)^2}, \\
	\cos\phi_j &= \frac{K_j^\top C_1}{u_j},\;\; \sin\phi_j=-\frac{K_j^\top C_2}{u_j}, \\
	q_j&=-K_j^\top o+h_j,
\end{aligned}	
\end{equation*}
for $j=1,\ldots, m$. The above is satisfied by taking $\phi_j=-\textnormal{sign}(-K_j^\top C_2)\cdot\textnormal{arccos}\left(\frac{K_j^\top C_1}{u_j}\right)$, $j=1,\ldots,m$.
 Denote the set of the reachable constraints as $\mathcal{R}=\left\{j\in\{1,\ldots, m\}:u_j>|h_j|\right\}$. Thus, the bouncing time $\tau^B$ is strictly smaller than 
\begin{equation} \label{eq:tmvg:time:bound}
	\tau^{BB}=\frac{1}{\sqrt{-a}}\min\left\{\textnormal{arccos}\left(-\frac{q_j}{u_j}\right)-\phi_j: j\in\mathcal{R}\right\}.
\end{equation}
Denote the index $j$ for which the minimum above is achieved as $j^*$, i.e.~$\tau^{BB}=\frac{1}{\sqrt{-a}}\left(\textnormal{arccos}(-q_{j^*}/u_{j^*})-\phi_j\right)$. This implies at time $\tau^{BB}$, the particle hits the wall $j^*$. When the particle hits the wall, its velocity reflects against the wall $F_{j^*}$, i.e.~its component perpendicular to the wall changes sign. Precisely, given the  velocity $v_t$ at the time $t$ of hitting the wall, the updated updated velocity becomes
\begin{equation} \label{eq:tmvg:reflected:velocity}
	v_t \leftarrow v_t-2\cdot \textnormal{proj}_{F_{j^*}}v_t = v_t - 2\frac{\langle v_t, F_{j^*}\rangle}{\|F_{j^*}\|_2^2} v_t.
\end{equation}

\subsection{QBHS Algorithm} 

Here is the summary of the algorithm for sampling from the truncated Gaussian distribution with mean $\mu\in\mathbb{R}^d$, variance $\Sigma\in\mathbb{R}^{d\times d}$ and the constraints described via a matrix $F\in\mathbb{R}^{d\times m}$ and a vector $h$.
Input parameters for the sampler are the following: 
\begin{enumerate}[label=(\roman*)]
\item matrix $P$ and $a_1=\ldots=a_d=a<0$, specifying function $g$;
\item refreshment rate $\lambda_0$;
\item total running time of the sampler $T_{total}$ and a constant time interval length $\delta$ based on which we collect discrete samples;
\item an initial point $(X_0,V_0)\in\mathbb{R}^d\times\mathbb{R}^d$ that satisfies the constraints.
\end{enumerate}
The current time is denoted as $t_{curr}$, where initially $t_{curr}\leftarrow 0$. At each step of the sampler while $t_{curr}<T_{total}$ repeat the following steps.

\begin{enumerate}
\item \label{item:tmvg:constants} Compute the vectors of constants $C_1\in\mathbb{R}^d$ and $C_2\in\mathbb{R}^d$ by solving \eqref{eq:tmvg:constants:C} corresponding to negative $a$ values with the initial position and velocity set at the current solution $\left(X_{t_{curr}}, V_{t_{curr}}\right)$.

\item Compute the following times:
\begin{enumerate}
\item bouncing time $\tau^B$ via thinning method;
\item reflection bound $\tau^{BB}$ from \eqref{eq:tmvg:time:bound};
\item refreshment time $\tau^{R}\sim\textnormal{Exp}(\lambda_0)$.
 \end{enumerate}
 
 \item Set the moving time $\tau$ to be the smallest of the above times, i.e. 
 \begin{equation*}
 	\tau=\min\left\{\tau^{B},\tau^{BB}, \tau^R\right\}.
 \end{equation*} 
  
 \item  Update the position and velocity functions $(X_t,V_t)$ for $t\in\left[t_{curr}, t_{curr}+\tau\right]$ following the flow solutions with the constants $C_1$ and $C_2$ given in Step \ref{item:tmvg:constants} above. Increase the current time by $\tau$: $t_{curr}\leftarrow t_{curr}+\tau$.
 
 \item Update the current velocity depending whether the bouncing, reflecting or refreshing event happened:
 \begin{enumerate}
 \item (bouncing) for $\tau=\tau^B$, $V_{t_{curr}}$ gets updated via kernel in \eqref{eq:kernel:R};
 \item (reflecting) for $\tau=\tau^{BB}$, the velocity gets updated according to \eqref{eq:tmvg:reflected:velocity}. 
 \item (refreshing) for $\tau=\tau^R$, $V_{t_{curr}}\sim\mathcal{N}(0,I_d)$.
 \end{enumerate}

\end{enumerate}

\subsection{Simulation results}

We compare the Quadratic Bouncy Hybrid Sampler to the Gibbs sampler in an example of a truncated bivariate normal taken from \cite{pakman2014exact}. The distribution is $\begin{pmatrix} x_1\\ x_2\end{pmatrix}\sim\mathcal{N}\left(\begin{pmatrix} 4 \\ 4\end{pmatrix}, I_4\right)$ truncated to $x_1\leq x_2\leq 1.1x_1$ and $x_1,x_2\geq 0$. The initial point for both samplers is $(1, 1.1)$. We run both samplers for the total of 100 times, where each chain is given the CPU time of 3 seconds. Note that since the Gibbs sampler is faster than than our proposed sampler (at least in our current implementation), for a given time the Gibbs sampler produces 12,000-15,000 samples while the QBHS produces 2,000-3,000.
 Denoting the true marginal means and variances of $x_1$ and $x_2$ as $(\mu_1, \sigma_1^2)$ and $(\mu_2,\sigma^2_2)$ and their estimates from a chain as $(\hat{\mu}_1, \hat{\sigma}_1^2)$ and $(\hat{\mu}_2,\hat{\sigma}^2)$, respectively, we compute mean squared error of these estimates (Table \ref{table:bivariate:normal}). The true values are computed by numerical integration. Even though the QBHS produces much less samples than the Gibbs sampler in a given time, we can see from the results that the QBHS mixes much faster. This is further illustrated by the histograms in Figure \ref{hist:qhbs} and Figure \ref{hist:gibbs}, where we plot marginal distributions of $x_1$ and $x_2$ given the samples of a single chain.

\begin{center}
\begin{table}[h!]
\renewcommand{\arraystretch}{2.0}
\begin{tabular}{ |c|c|c| } 
\hline
 	  & Gibbs & QBHS \\
\hline
MSE$(\hat{\mu}_1,\mu_1)$  & 0.004208 & 0.002358   \\ 
MSE$(\hat{\mu}_2, \mu_2)$ &  0.004619 & 0.002548  \\ 
MSE$(\hat{\sigma}^2, \sigma_1^2)$ & 0.008184 & 0.001541 \\ 
MSE$(\hat{\sigma}^2, \sigma_2^2)$ & 0.009778 & 0.001883\\ 
\hline
\end{tabular}
\caption{Comparing the Gibbs sampler and the QBHS in the truncated bivariate normal example. The table shows that the QBHS estimates better the marginal means and variances of individual coordinates than the Gibbs sampler.}
\label{table:bivariate:normal}
\end{table}
\end{center}

\begin{figure}[h!]
	\includegraphics[width=8cm]{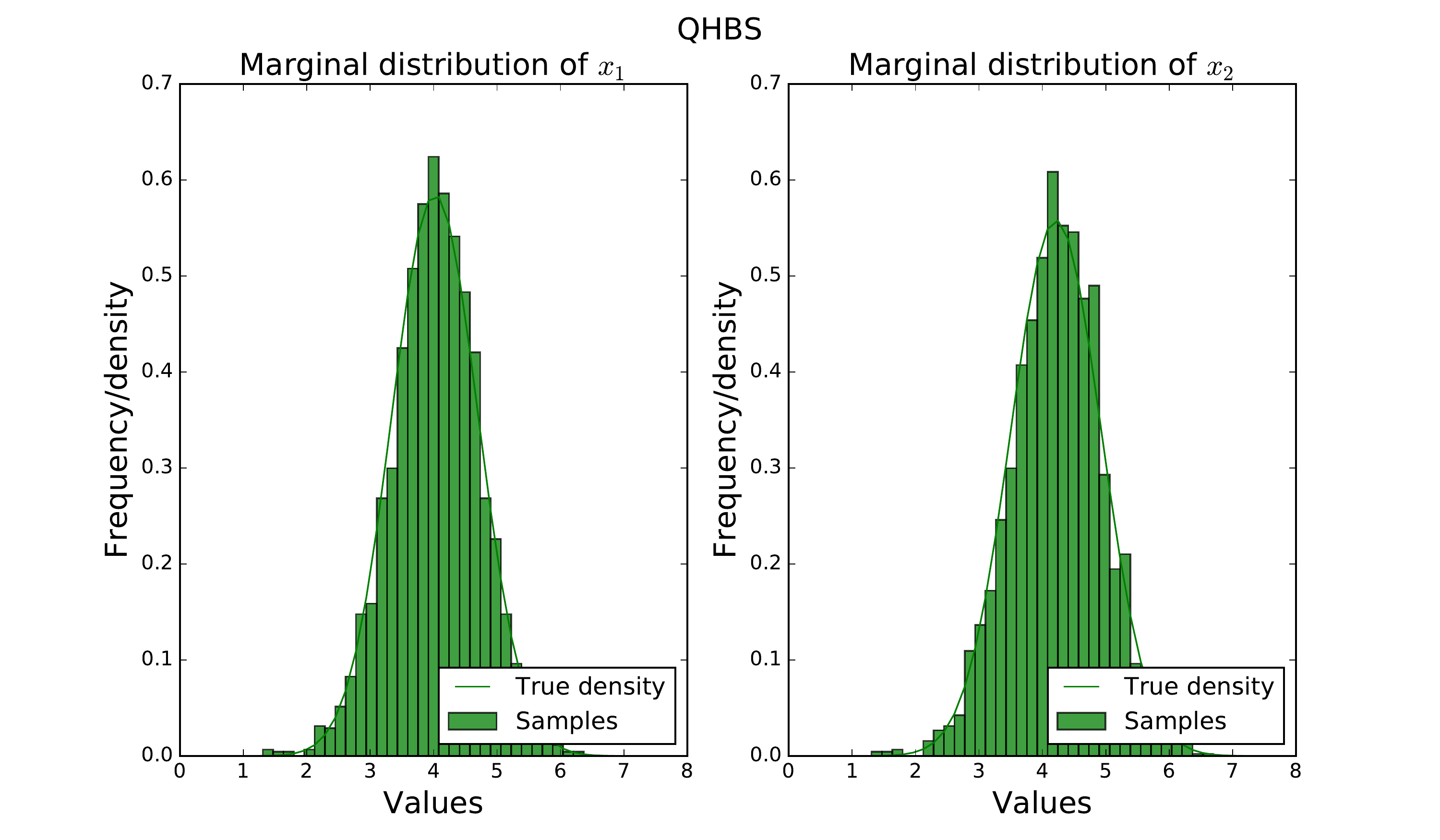}
	\caption{Histogram of marginal $x_1$ and $x_2$ samples by using the Quadratic Hybrid Bouncy Sampler.}
	\label{hist:qhbs}
\end{figure}

\begin{figure}[h!]
	\includegraphics[width=8cm]{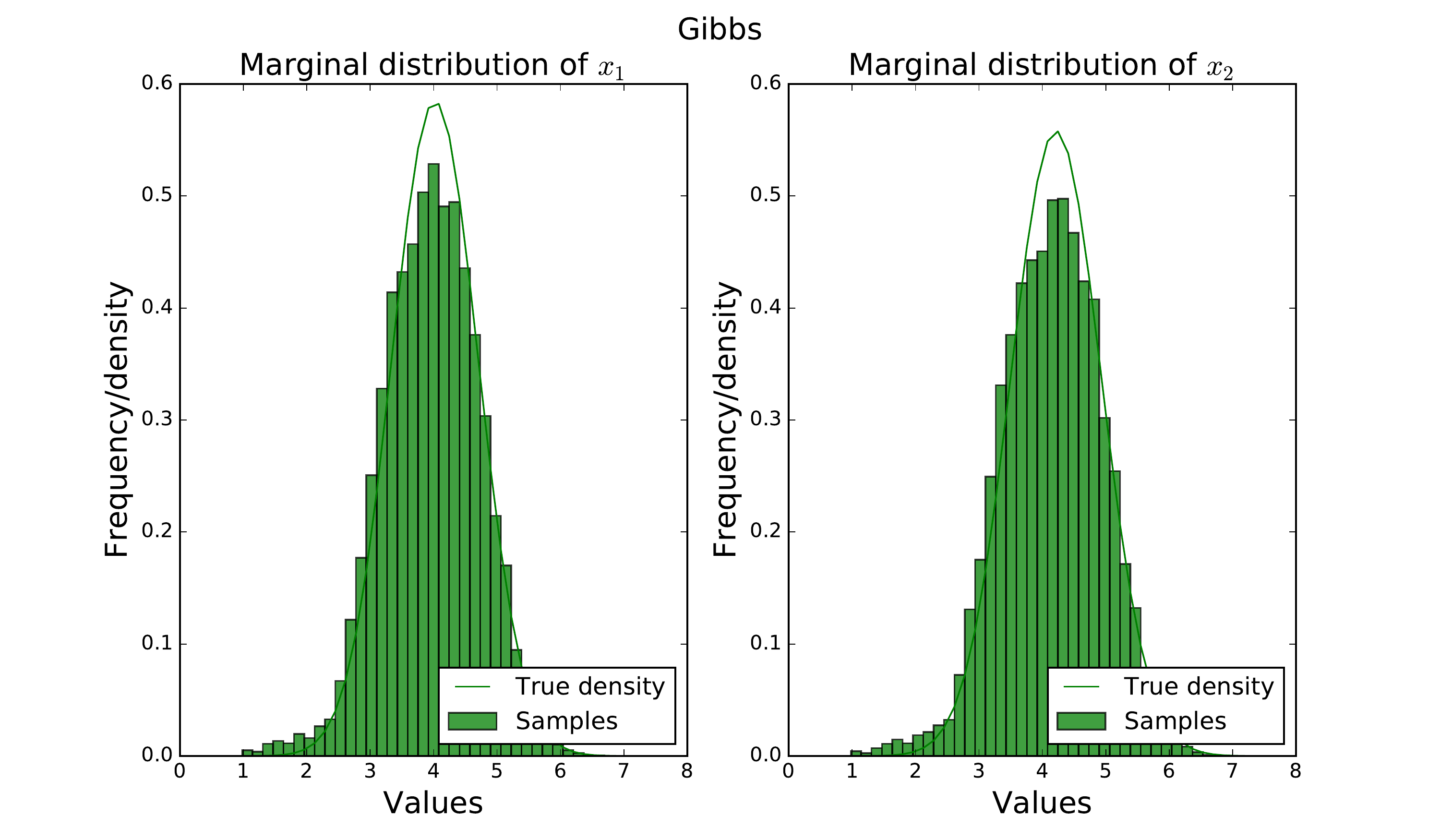}
	\caption{Histogram of marginal $x_1$ and $x_2$ samples by using the Gibbs sampler.}
	\label{hist:gibbs}
\end{figure}


\section{Further Extensions of Bouncy Hybrid Sampler} \label{sec:extensions}

\subsection{Modifying Bouncing Velocity} \label{sec:general:BPS}

We modify the BHS to have stochastic transition dynamics at the bouncy times rather than deterministic update using the kernel in \eqref{eq:kernel:R}.
Our modified sampler, called \textit{Stochastic Bouncy Hybrid Sampler}, includes as a special case the Generalized Bouncy Particle Sampler (GBPS) introduced in \cite{wu2017generalized}. 

At a bouncy time, the velocity $v$ has a component $v_p$ parallel to $g(x)$ and a component $v_o$ orthogonal to this direction, where $x$ is the current position of the particle. The transition dynamics flips the parallel sub-vector $v_p$ and resamples the orthogonal sub-vector with respect to some distribution.
The new transition kernel becomes
\begin{equation} \label{eq:generalized:kernel}
	Q\left(dx',dv'\mid x,v\right)=\delta_{x'}(x)\delta_{-v_p}\left(dv_p'\right)\mathcal{N}_{v_p^{\perp}}\left(dv_o'\right),
\end{equation}
where 
\begin{equation*}
\begin{aligned}
	v_p &= \frac{\langle v, g(x)\rangle}{\langle g(x), g(x)\rangle}g(x), \qquad v_o=v-v_p, \\
	v_p' &= \frac{\langle v',g(x)\rangle}{\langle g(x), g(x)\rangle}g(x), \qquad v_o'=v'-v_p',
\end{aligned}
\end{equation*}
and $\mathcal{N}_{v_p^{\perp}}$ is the $(d-1)$-dimensional standard normal distribution over the space $v_p^{\perp}$. Taking $g(x)=\nabla U(x)$ in the Bouncy Hybrid Sampler from Section \ref{sec:hmc} with the kernel in \eqref{eq:generalized:kernel} gives GBPS.

The infinitesimal generator of the process becomes
\begin{equation*}
\begin{aligned}
	\mathcal{A}f=&\langle\nabla_x f, v\rangle+\langle \nabla_v f,-\nabla U(x)+g(x)\rangle-\lambda(x,v)f(x,v)\\
	&+\lambda(x,v)\int_{v'\in\mathbb{R}^d} f(x,v')Q(dv'\mid x,v).
\end{aligned}
\end{equation*}
The following proposition proves the target distribution is invariant with respect to the new process for a general function $g$.

\begin{prop}\textnormal{\textbf{[Invariant density of Stochastic BHS]}} \label{thm:invariance:generalized} Assuming $U$ is continuously differentiable and $g$ is integrable,
the above piecewise deterministic Markov chain admits $\rho(x,v)=\pi(x)\psi_d(v)$ as its invariant distribution.
\end{prop}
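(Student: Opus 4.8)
The plan is to mimic the proof of Proposition~\ref{prop:invariance} and reduce everything, via Proposition~\ref{prop:InvarianceThm}, to checking $\int_{\mathbb{R}^{2d}}\mathcal{A}f(x,v)\,d\rho(x,v)=0$ for all $f$ in a suitable core, with $\rho(x,v)=\pi(x)\psi_d(v)$. The two transport terms are handled verbatim as in Proposition~\ref{prop:invariance}: integrating by parts in $x$ against $\pi$ and in $v$ against $\psi_d$ converts $\langle\nabla_x f,v\rangle$ into $\langle\nabla U(x),v\rangle f$ and $\langle\nabla_v f,-\nabla U(x)+g(x)\rangle$ into $\langle v,-\nabla U(x)+g(x)\rangle f$, the Hamiltonian pieces $\langle\nabla U(x),v\rangle$ cancel, and one is left with $\int\langle v,g(x)\rangle f(x,v)\,d\rho$. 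Any $\lambda_0$-refreshment branch of $Q$ (resampling the whole velocity from $\psi_d$) contributes zero by Fubini, exactly as before. Hence it remains to prove that the bounce part equals $-\int\langle v,g(x)\rangle f(x,v)\,d\rho$, i.e.
\[
\int_{\mathbb{R}^{2d}}\max\{0,\langle v,g(x)\rangle\}\Bigl(\int_{\mathbb{R}^d}f(x,v')\,Q(dv'\mid x,v)-f(x,v)\Bigr)d\rho(x,v)=-\int_{\mathbb{R}^{2d}}\langle v,g(x)\rangle f(x,v)\,d\rho(x,v).
\]

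For fixed $x$ with $g(x)\neq 0$ (the set $\{g(x)=0\}$ contributes nothing, since the bounce rate vanishes there), set $\hat g=g(x)/\|g(x)\|$ and decompose $v=s\hat g+w$ with $s=\langle v,\hat g\rangle\in\mathbb{R}$ and $w\in g(x)^{\perp}$, so that $\psi_d(v)\,dv=\psi_1(s)\,ds\,\psi_{d-1}(w)\,dw$ by orthogonality of the Gaussian. The two facts that make this work are: the bounce rate $\max\{0,\langle v,g(x)\rangle\}=\|g(x)\|\,s_+$ depends on $v$ only through $s$; and $Q(\cdot\mid x,v)$ acts as $s\mapsto-s$ (deterministic) together with an independent fresh draw $w'\sim\psi_{d-1}$ on $g(x)^{\perp}$, whose law is precisely $\mathcal{N}_{v_p^{\perp}}$. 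Carrying out the harmless $w$-integral ($\int\psi_{d-1}(w)\,dw=1$), applying $Q$, then substituting $s\mapsto-s$ (using that $\psi_1$ is even), relabelling $w'\to w$ and recombining $v=s\hat g+w$, the first summand becomes $\int(\|g(x)\|(-s)_+)f(x,v)\,\psi_d(v)\,dv=\int\max\{0,-\langle v,g(x)\rangle\}f(x,v)\,\psi_d(v)\,dv$. Subtracting the second summand $\int\|g(x)\|s_+\,f(x,v)\,\psi_d(v)\,dv=\int\max\{0,\langle v,g(x)\rangle\}f(x,v)\,\psi_d(v)\,dv$ and using $\max\{0,-t\}-\max\{0,t\}=-t$ gives $-\int\langle v,g(x)\rangle f(x,v)\,\psi_d(v)\,dv$; integrating against $\pi(x)\,dx$ yields the displayed identity, and together with the transport part this proves $\int\mathcal{A}f\,d\rho=0$.

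The routine checks are the usual ones: one restricts $f$ to a core for $\mathcal{A}$ on which the integration by parts produces no boundary terms and the transport term is $\rho$-integrable (this is where continuous differentiability of $U$ and integrability of $g$ are used), and one invokes Fubini to interchange the $v$- and $v'$-integrals. The one genuinely new point relative to Proposition~\ref{prop:invariance}, and the step I expect to need the most care to state cleanly, is the ``direction-by-direction'' reversal of $\rho$ under $Q$: because $Q$ now \emph{resamples} the orthogonal component instead of leaving it fixed, the cancellation works only because the bounce rate sees just the $g(x)$-component of $v$ and the restriction of $\psi_d$ to $g(x)^{\perp}$ coincides exactly with the resampling law $\mathcal{N}_{v_p^{\perp}}$. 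No hard estimates are involved beyond this bookkeeping.
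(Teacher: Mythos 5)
Your proposal is correct and follows essentially the same route as the paper's proof: reduce to $\int\mathcal{A}f\,d\rho=0$, reuse the integration-by-parts computation from Proposition~\ref{prop:invariance} for the transport terms, and handle the bounce term by decomposing $v$ into its component along $g(x)$ and its orthogonal complement, using that the rate depends only on the parallel component, that $\psi_d$ factorizes, and that the resampled orthogonal law coincides with $\mathcal{N}_{v_p^{\perp}}$. The only cosmetic difference is that the paper phrases the decomposition as a ``without loss of generality'' choice of coordinates with $v_p=(v_1,0,\dots,0)$, whereas you carry out the rotation explicitly via $\hat g=g(x)/\|g(x)\|$; the cancellation $(-t)_+-t_+=-t$ at the end is identical.
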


\begin{proof}
	From the proof of Theorem \ref{prop:invariance}, we have
	\begin{equation*}
	\begin{aligned}
	&\int\limits_{x\in\mathbb{R}^d}\int\limits_{v\in\mathbb{R}^d} \bigl(\langle\nabla_x f, v\rangle	+\langle\nabla_v f,-\nabla U(x)+g(x)\rangle\bigr)d\rho(x,v)  \\
	&=   \int\limits_{x\in\mathbb{R}^d}\int\limits_{v\in\mathbb{R}^d} \langle v,g(x)\rangle f(x,v)d\rho(x,v)
	\end{aligned}
	\end{equation*}
	for any $f$ satisfying some regularity conditions, e.g.~boundedness and differentiability.
	Similarly to the proof of Theorem 1 in \cite{wu2017generalized}, we assume without the loss of generality that $v=(v_1,\ldots,v_d)$ decomposes into the sum of $v_p$ and $v_o$ with $v_p=(v_1,0, \ldots, 0)$, implying 
	\begin{equation*}
	\max\left\{0, \langle v, g(x)\rangle\right\}=\max\left\{0,\langle v_p, g(x)\right\}=\max\left\{0,\langle(v_1,0,\ldots,0), g(x)\rangle\right\},
	\end{equation*}
	where we used the fact that $v_p$ is parallel to and $v_o$ is orthogonal to $g(x)$.
	This implies
	\begin{equation}
	\begin{aligned}
	&\int\limits_{v\in\mathbb{R}^d}\int\limits_{v'\in\mathbb{R}^d}f\left(x,v'\right)(v^\top g(x))_+\pi(x)\psi_d(v)Q(dv'|x,v)dv\\
	&=\int\limits_{v_1\in\mathbb{R}}\int\limits_{(v_2,\ldots, v_d)\in\mathbb{R}^{d-1}}\int\limits_{v_1'\in\mathbb{R}}\int\limits_{(v_2',\ldots,v_d') \in\mathbb{R}^{d-1}}
	f\left(x,v_1',\ldots, v_d'\right)\left(v^\top g(x)\right)_+ \\
	&\qquad\cdot\pi(x)\psi_1(v_1)\psi_{d-1}\left(v_2,\ldots, v_d\right)\delta_{-v_1}(v_1')\psi_{d-1}(v_2',\ldots, v_d')dvdv'\\
	&=\int\limits_{v_1\in\mathbb{R}}\int\limits_{(v_2,\ldots, v_d)\in\mathbb{R}^{d-1}}\int\limits_{(v_2',\ldots,v_d') \in\mathbb{R}^{d-1}}
	f\left(x,-v_1,v_2', \ldots, v_d'\right)\left((v_1,0,\ldots,0)^\top g(x)\right)_+ \\
	&\qquad\cdot\pi(x)\psi_1(v_1)\psi_{d-1}\left(v_2,\ldots, v_d\right)\psi_{d-1}(v_2',\ldots, v_d')dv_1\ldots dv_d dv'_2\ldots dv'_d.
	\end{aligned}
	\end{equation}
	By the change of variables $v_1\rightarrow -v_1$, we have
	\begin{equation*}
	\begin{aligned}
	&\int\limits_{v_1\in\mathbb{R}}\int\limits_{(v_2,\ldots, v_d)\in\mathbb{R}^{d-1}}\int\limits_{(v_2',\ldots,v_d') \in\mathbb{R}^{d-1}}
	f\left(x,v_1,v_2',\ldots,v_d'\right)\left(-(v_1,0,\ldots,0)^\top g(x)\right)_+ \\
	&\qquad\cdot\pi(x)\psi_1(v_1)\psi_{d-1}(v_2,\ldots, v_d)\psi_{d-1}\left(v_2',\ldots, v_d'\right)dv_1\ldots dv_d dv'_2\ldots dv'_d.
	\end{aligned}
	\end{equation*}
	Integrating out variables $(v_2,\ldots, v_d)$, we have
	\begin{equation*}
	\begin{aligned}
	&\int\limits_{v_1\in\mathbb{R}}\int\limits_{(v_2',\ldots,v_d') \in\mathbb{R}^{d-1}}
	f\left(x,v_1,v_2', \ldots, v_d'\right)\left(-(v_1,0,\ldots,0)^\top g(x)\right)_+ \\
	&\qquad\cdot\pi(x)\psi_1(v_1)\psi_{d-1}\left(v_2',\ldots, v_d'\right)dv_1dv'_2\ldots dv'_d \\
	&=\int\limits_{v\in\mathbb{R}^d}f(x,v)\left(-v^\top g(x)\right)_+\pi(x)\psi_d(v)dv.
	\end{aligned}
	\end{equation*}

	Combining the above two results we have
	\begin{equation*}
	\begin{aligned}
	&\int\limits_{x\in\mathbb{R}^d}\int\limits_{v\in\mathbb{R}^d}	 \mathcal{A}f(x,v)\pi(x)\psi_d(v)dxdv
	=\int\limits_{x\in\mathbb{R}^d}\int\limits_{v\in\mathbb{R}^d} v^\top g(x)f(x,v)\pi(x)\psi_d(v)dxdv\\
	&\qquad +\int\limits_{x\in\mathbb{R}^d}\int\limits_{v\in\mathbb{R}^d}\left(\left(-v^\top g(x)\right)_+-\left(v^\top g(x)\right)_+)\right)f(x,v)\pi(x)\psi_d(v)dxdv.
	\end{aligned}
	\end{equation*}
	Since $\left(-v^\top g(x)\right)_+-\left(v^\top g(x)\right)_+=-v^\top g(x)$, the above expression equals zero.
\end{proof}

\subsection{Generalizing Refreshment Velocity} \label{sec:RHMC:velocity}

We describe a generalization of the refreshment velocity update in the Bouncy Hybrid Sampler from Section \ref{sec:hmc}. The derived family then includes the most general version of Randomized Hamiltonian MCMC from \cite{bou2017randomized} as a special case.

At the refreshment time, we update the velocity from $v$ to $v'$ as 
\begin{equation*}
	v'=\cos(\phi)v+\sin(\phi)\xi,
\end{equation*}
where $\xi\sim\mathcal{N}(0,I_d)$ and $\phi\in(0,\pi/2]$ is a deterministic parameter. The parameter $\phi$ governs how much of the updated velocity depends on the velocity prior to jump.
$\phi=\pi/2$ corresponds to the BHS and in this case $v'$ does not depend on $v$ and is completely random. The infinitesimal generator for the derived sampler for general $\phi$ becomes
\begin{equation*}
\begin{aligned}
	\mathcal{A}f&=\left\langle\nabla_x f,v\right\rangle+\left\langle\nabla_v f,-\nabla U(x)+g(x)\right\rangle-\lambda(x,v)f(x,v)\\
	&+\max\left\{0,\langle v,g(x)\rangle\right\}f(x,R(x)v)+\lambda_0\int\limits_{\xi\in\mathbb{R}^d}f\bigl(x,\cos(\phi)v+\sin(\phi)\xi\bigr)\psi_d(d\xi).
\end{aligned}
\end{equation*}
Using Proposition \ref{prop:InvarianceThm} of this paper and Proposition 3.1 of \cite{bou2017randomized}, we have that the target distribution is invariant for the proposed chain.
For $g(x)=0$, this infinitesimal operator becomes exactly equal to the infinitesimal operator in \cite{bou2017randomized}.


\section{A Family of Coordinate Hybrid Monte Carlo Samplers} \label{sec:zig:zag}

This section introduces a novel class of samplers called the Coordinate Bouncy Hybrid Samplers (CBHS) as yet another application of the piecewise deterministic Markov process framework. It is an infinite class of samplers, whose the velocity update happens only along one coordinate. At every coordinate switch the update only changes a single coordinate of the velocity. This sampler generalizes the Zig-Zag process of \cite{bierkens2016zig}.

Each function $g(x):\mathbb{R}^d\rightarrow\mathbb{R}^d$ specifies another sampler in this class. Denote the coordinates of such function $g$ as $g(x)=\left(g_1(x),\ldots, g_d(x)\right)$. Let the function $\gamma_i(x,v):\mathbb{R}^d\times\mathbb{R}^d\rightarrow\mathbb{R}$ be such that $\gamma_i(x,v)=\gamma_i(x,R_iv)$, where $R_i$ is the identity matrix with the element at $(i,i)$ set at -1.
 Denote the total time as $T_{total}$ and the current time as $t_{curr}$, where initially $t_{curr}\leftarrow 0$. This family of samplers can be described by performing the following steps while $t_{curr}<T_{total}$. 
\begin{enumerate}
	\item For each coordinate $i=1,\ldots, d$, solve the following system of differential equations for  position and velocity
	\begin{equation*}
	\begin{aligned}
		\dot{x} &=v \\
		\dot{v}_j&=\begin{cases} 0 & \textnormal{ for } j\neq i \\
		-\partial_{x_i} U(x)+g_i(x) & \textnormal{ for }j=i \end{cases}
	\end{aligned}
	\end{equation*}
	 such that the initial state of the solution at $t=0$ is at the the current state $(X_{t_{curr}},V_{t_{curr}})$. The solution becomes $(i, x_t^i,v_t^i)$, $i=1,\ldots,d$.
	
	\item For each $i=1,\ldots, d$, set the Poisson rate
	\begin{equation*}
		\lambda_i\left(x_t^i,v_t^i\right)=\max\left\{0, v^i_{i,t}\cdot g_i(x_t^i)\right\}+\gamma_i\left(x_t^i,v_t^i\right),
	\end{equation*}
	where $v_{i, t}^i$ denotes the $i$-th coordinate of $v_t^i.$
	
	\item For each $i=1,\ldots, d$, sample $\tau_i$ as the first arrival time of a Poisson process with the rate given above, i.e.~from a distribution 
	\begin{equation*}
		\mathbb{P}\{\tau_i\geq t\}=\exp\left(-\int_0^t\lambda_i(x_t^i,v_t^i)ds\right).
	\end{equation*}
 	 Set the moving time $\tau$ to be the smallest of the sampled times 
 	\begin{equation*}
 		\tau=\underset{1\leq i\leq d}\min\tau_i.
 	\end{equation*} 
 	Denote $i_0=\arg\underset{1\leq i\leq d}\min \tau_i$, so that $\tau=\tau_{i_0}$.
 	
	\item The particle moves along $\left(x_t^{i_0},v_t^{i_0},i_0\right)$ for time $\tau=\tau_{i_0}$, i.e.~$\left(X_{t_{curr}+t},V_{t_{curr}+t}\right)=\left(x_t^{i_0},v_t^{i_0}\right)$, $t\in[0,\tau]$. Update the current time $t_{curr}\leftarrow t_{curr}+\tau$.
	
	\item Update the velocity at the current time according to the matrix $R_{i_0}$ as follows
	\begin{equation*}
			R_{i_0}v = \begin{cases}
				v_i & \textnormal{ for } i\neq {i_0}\\
				-v_i & \textnormal{ for } i=i_0,
			\end{cases}
	\end{equation*}
	switching only the $i_0$-th coordinate of the current velocity.
\end{enumerate}

The infinitesimal generator of the process above equals $\mathcal{A}f=\sum_{i=1}^n\mathcal{A}_if$, where
\begin{equation*}
	\mathcal{A}_if = \partial_{x_i} f\cdot v_i+\partial_{v_i} f\cdot\bigl(-\partial_{x_i} U(x)+g_i(x)\bigr)+\lambda_i(x,v)\bigl(f(x,R_iv)-f(x,v)\bigr).
\end{equation*}

\begin{prop}\textnormal{\textbf{[Invariant density of CBHS]}}
	Assuming $U$ is continuously differentiable and $g$ is integrable, the measure $\pi(x)\bar{\psi}_d(v)$ is a stationary measure for the process above, where $\bar{\psi}_d(\cdot)$ is any density invariant under $R_i$ for all $i=1,\ldots, d$, i.e.~$\bar{\psi}_d(v)=\bar{\psi}_d(R_iv)$ for all $v\in\mathbb{R}^d$ and all $i=1,\ldots,d$.
\end{prop}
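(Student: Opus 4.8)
The plan is to verify the stationarity condition from Proposition \ref{prop:InvarianceThm}, namely that $\int_{\mathbb{R}^{2d}} \mathcal{A}f(x,v)\,\pi(x)\bar\psi_d(v)\,dx\,dv = 0$ for all sufficiently regular $f$, and since $\mathcal{A} = \sum_{i=1}^d \mathcal{A}_i$, it suffices to show each term $\int \mathcal{A}_i f \cdot \pi(x)\bar\psi_d(v)\,dx\,dv$ handles cleanly; in fact I expect the natural decomposition to be that the sum over $i$ of the ``transport'' parts telescopes against the sum over $i$ of the ``jump'' parts, exactly mirroring the proof of Proposition \ref{prop:invariance}. First I would treat the transport part of $\mathcal{A}_i$: integrate $\partial_{x_i} f \cdot v_i \cdot \pi(x)\bar\psi_d(v)$ by parts in $x_i$, using $\partial_{x_i}\pi(x) = -\partial_{x_i}U(x)\,\pi(x)$, to get $\int v_i\,\partial_{x_i}U(x)\,f\,\pi\bar\psi_d$; and integrate $\partial_{v_i} f \cdot(-\partial_{x_i}U(x)+g_i(x))\cdot\pi(x)\bar\psi_d(v)$ by parts in $v_i$. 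Here I must be careful: unlike the Gaussian case, $\bar\psi_d$ need not satisfy $\partial_{v_i}\bar\psi_d(v) = -v_i\bar\psi_d(v)$, so the boundary/by-parts term will produce $\int(-\partial_{x_i}U(x)+g_i(x))\cdot(-\partial_{v_i}\bar\psi_d(v)/\bar\psi_d(v))\cdot f\,\pi\bar\psi_d$ rather than something with a clean $v_i$ factor.

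This suggests that the claim as literally stated — with $\bar\psi_d$ \emph{any} density invariant under each $R_i$ — is too weak unless $\bar\psi_d$ also has the separable/even structure that makes the $\partial_{v_i}U$-type term vanish or cancel; I would therefore first check whether the intended hypothesis is that $\bar\psi_d(v) = \prod_{i=1}^d \psi(v_i)$ for some symmetric one-dimensional density, or at least that the transport term contributes $\int v_i \partial_{x_i}U \cdot f\pi\bar\psi_d$ exactly. Assuming the correct reading (I would state it explicitly as $\bar\psi_d$ being a product of even densities, or invoke the $R_i$-invariance together with whatever additional structure is actually used), the transport part of $\mathcal{A}_i$ simplifies to $\int v_i\,g_i(x)\,f(x,v)\,\pi(x)\bar\psi_d(v)\,dx\,dv$ — the $-\partial_{x_i}U$ pieces cancelling between the two by-parts computations just as in Proposition \ref{prop:invariance}.

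Next I would handle the jump part: $\int \lambda_i(x,v)\big(f(x,R_iv)-f(x,v)\big)\pi(x)\bar\psi_d(v)\,dx\,dv$. In the first summand perform the change of variables $v\mapsto R_iv$ (which has Jacobian $1$, preserves $\bar\psi_d$ by the invariance hypothesis, and preserves the $\gamma_i$ part of $\lambda_i$ by the assumption $\gamma_i(x,v)=\gamma_i(x,R_iv)$, while sending $\max\{0,v_i g_i(x)\}$ to $\max\{0,-v_i g_i(x)\}$), so the jump part becomes $\int\big(\max\{0,-v_i g_i(x)\}-\max\{0,v_i g_i(x)\}+\gamma_i(x,v)-\gamma_i(x,v)\big)f(x,v)\pi(x)\bar\psi_d(v)$, and since $\max\{0,-t\}-\max\{0,t\} = -t$ this equals $-\int v_i g_i(x)\,f(x,v)\,\pi(x)\bar\psi_d(v)$. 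Summing over $i$, the transport contributions $+\sum_i\int v_i g_i(x)f\pi\bar\psi_d$ and jump contributions $-\sum_i\int v_i g_i(x)f\pi\bar\psi_d$ cancel, giving zero, which is the desired identity. The main obstacle, as flagged, is the by-parts step in $v_i$ for a general $\bar\psi_d$: it is where the statement's hypothesis must be pinned down precisely, and where the proof either goes through verbatim (for product-of-even-densities) or requires an extra cancellation argument; everything else is a routine replay of the computations in the proof of Proposition \ref{prop:invariance}, now done coordinate-by-coordinate.
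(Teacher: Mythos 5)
Your proof follows the same route as the paper's: integrate the $x$-transport term by parts in $x_i$, the $v$-transport term by parts in $v_i$, and handle the jump term by the change of variables $u=R_iv$, after which the transport and jump contributions cancel coordinate by coordinate (no telescoping across $i$ is needed; each $\int \mathcal{A}_i f\,\pi\bar\psi_d\,dx\,dv$ vanishes on its own). The concern you flag is genuine, and it is in fact a gap in the paper's own statement and proof: the integration by parts in $v_i$ produces the factor $-\partial_{v_i}\log\bar\psi_d(v)$ where the cancellation needs the factor $v_i$, and $R_i$-invariance of $\bar\psi_d$ does not supply this. The paper's proof silently assumes it --- note that its displayed identities are weighted by $\psi_d$, the standard Gaussian, not by $\bar\psi_d$, and the second identity $\int \partial_{v_i}f\cdot(-\partial_{x_i}U+g_i)\,\pi\,\psi_d = \int v_i(-\partial_{x_i}U+g_i)f\,\pi\,\psi_d$ is precisely Gaussian integration by parts in $v_i$.

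One correction to your proposed repair: requiring $\bar\psi_d$ to be a product of even one-dimensional densities is still not enough, since the by-parts step needs $-\psi'(v_i)/\psi(v_i)=v_i$, which forces the standard Gaussian. The statement is genuinely false in the stated generality: take $d=1$, $U(x)=x^2/2$, $g\equiv 0$, $\gamma_1\equiv 0$, and $\bar\psi_1(v)\propto e^{-v^4}$ (even, hence $R_1$-invariant). Then $\lambda_1\equiv 0$, so the process is the deterministic Hamiltonian rotation in the $(x,v)$-plane, whose invariant densities are functions of $x^2+v^2$; hence $\pi(x)\bar\psi_1(v)\propto e^{-x^2/2-v^4}$ is not stationary. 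The hypotheses that make the argument work are either $\bar\psi_d=\psi_d$ the standard Gaussian (for general $g$), or $g=\nabla U$ (the Zig-Zag case), in which $\dot v\equiv 0$, the $v$-transport term disappears entirely, and any $R_i$-invariant velocity density is admissible. With the Gaussian hypothesis in place, the remainder of your argument --- the $x_i$ by-parts, the $u=R_iv$ change of variables using $\gamma_i(x,v)=\gamma_i(x,R_iv)$, and the identity $\max\{0,-t\}-\max\{0,t\}=-t$ --- is exactly the paper's computation.
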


\begin{proof}
	Using integration by parts
	\begin{equation*}
	\begin{aligned}
	&\int_{\mathbb{R}^d}(\partial_{x_i}f\cdot v_i)\pi(x)dx=\int_{\mathbb{R}^d} (v_i\cdot \partial_{x_i}U(x))f(x,v)\pi(x)dx \;\;\textnormal{ and }\\
	&\int_{\mathbb{R}^d}\partial_{v_i}f\cdot\bigl(-\partial_{x_i}U(x)+g_i(x)\bigr)\pi(x)dx=\int_{\mathbb{R}^d} \bigl(v_i\cdot (-\partial_{x_i}U(x)+g_i)\bigr)f(x,v)\pi(x)dx
	\end{aligned}
	\end{equation*}
	Thus for the first two terms we have
	\begin{equation*}
		\int_{\mathbb{R}^d}\bigl(\partial_{x_i}f\cdot v_i
		+\partial_{v_i}f\cdot(-\partial_{x_i}U(x)+g_i(x)\bigr)\pi(x)dx=\int_{\mathbb{R}^d}(v_i\cdot g_i)f(x,v)\pi(x)dx.
	\end{equation*}
	Using the change of variables $u=R_iv$, we have 
	\begin{equation*}
	\begin{aligned}
	&\int_{(x,v)\in\mathbb{R}^{2d}}\bigl((v_i\cdot g_i(x))_++\gamma_i(x,v)\bigr)f(x,R_iv)\pi(x)\psi_d(v)dxdv \\
	&=\int_{(x,u)\in\mathbb{R}^{2d}} \bigl((-u_i\cdot g_i(x))_++\gamma_i(x,R_iu)\bigr)f(x,u)\pi(x)\psi_d(u)dxdu,
	\end{aligned}
	\end{equation*}
	implying
	\begin{equation*}
	\begin{aligned}
	&\int_{(x,v)\in\mathbb{R}^{2d}}\lambda_i(x,v)\left(f(x,R_iv)-f(x,v)\right)\pi(x)\psi_d(v)dxdv \\
	&= \int_{(x,v)\in\mathbb{R}^{2d}} \left((-v_i\cdot g_i(x))_+-(v_i\cdot g_i(x))_+\right)f(x,v)\pi(x)\psi_d(v)dxdv\\
	&=-\int_{(x,v)\in\mathbb{R}^{2d}} \left(v_i\cdot g_i(x)\right)f(x,v)\pi(x)\psi_d(v)dxdv.
	\end{aligned}
	\end{equation*}
	It follows $\int_{(x,v)\in\mathbb{R}^{2d}}\mathcal{A}_if(x,v)\pi(x)\psi_d(v)dxdv=0$, proving the proposition.
\end{proof}

\begin{remarks}
$\:$
\begin{itemize}[leftmargin=*]
\item Taking $g(x)=\nabla U(x)$, the sampler above becomes the Zig-Zag process of \cite{bierkens2016zig}.  
\item Note that not all densities $\bar{\psi}_d$ invariant under $R_i$ for all $i=1,\ldots,d$, will be irreducible for the described chain. 
\end{itemize}
\end{remarks}

\section{Conclusion}

This work introduces an infinite class of samplers, generalizing and connecting together the recent samplers, including the Bouncy Particle Sampler, Hamiltonian Markov chain and the Zig-Zag process. Our sampler is a piecewise deterministic Markov processes, whose trajectories are not necessarily linear but governed by the solution of the system of differential equations. The moving time along each of these trajectories is simulated as the first arrival time of a  corresponding Poisson process.
We proved the proposed sampler has the target distribution as its invariant/stationary distribution. 
There are already results showing some specific instances of our sampler are ergodic, including the Randomized Hamiltonian MC, the BPS and the Zig-Zag. The conditions under which the general proposed process is ergodic, including a rate of convergence, are left for future work.

A related question involves investigating the distribution-dependent choice of function $g$. We suspect that the mixing time of the proposed sampler will depend on $g$ with the optimal $g$ being different for different distribution. 

\section*{Acknowledgments} 

The authors would like to thank Persi Diaconis and Jonathan Taylor for helpful discussions.


\bibliography{PDMP}
\bibliographystyle{plain}

\end{document}